\def\SU{\mathop{\rm SU}}
\newcommand{\frS}{\mathfrak{S}}
\def \be {\begin{equation}}
\def \ee {\end{equation}}
\newcommand{\Tr}{\mathrm{Tr}}
\def \interior{\mathrm{Int}\,}
\def\blambda{\bm{\lambda}}
\def \mix{\mathop{\rm mix}\nolimits}
\def \cH{{\cal H}}
\def \sofc2{{\cal S}({\mathbb C}^2)}
\def\>{\rangle}
\def\<{\langle}
\newtheorem{theorem}{Theorem}
\newtheorem{lemma}[theorem]{Lemma}
\newtheorem{corollary}[theorem]{Corollary}
\newtheorem{example}{Example}
\def\Label#1{\label{#1}\ [\ \text{#1}\ ]\ }
\def\Label{\label}
\begin{document}

\title{Another quantum version of Sanov theorem}

\author{Masahito Hayashi}\email{hmasahito@cuhk.edu.cn}
\affiliation{School of Data Science, The Chinese University of Hong Kong, Shenzhen, Longgang District, Shenzhen, 518172, China}
\affiliation{International Quantum Academy, Futian District, Shenzhen 518048, China}
\affiliation{Graduate School of Mathematics, Nagoya University, Nagoya, 464-8602, Japan}

\begin{abstract}
We study how to extend Sanov theorem to the quantum setting.
Although a quantum version of the Sanov theorem was proposed in 
Bjelakovic et al (Commun. Math. Phys., 260, p.659 (2005)),
the classical case of their statement is not the same as 
Sanov theorem
because Sanov theorem discusses the behavior of 
the empirical distribution when 
the empirical distribution is different from the true distribution,
but they studied a problem related to quantum hypothesis testing, whose classical version can be shown by 
classical Sanov theorem.
We propose 
another quantum version of Sanov theorem
by considering the quantum analog of 
the empirical distribution.
\end{abstract}

\maketitle

\section{Introduction}
Quantum Sanov theorem is known as a theorem for 
quantum hypothesis testing with a composite null hypothesis and a simple alternative hypothesis \cite{Bjelakovic,Notzel},
which is a natural extension of quantum Stein's lemma \cite{HP,ON}.
However, the original Sanov theorem has a
slightly different statement in the classical setting \cite{Bucklew,DZ,CT}.
That is, when $n$ data $X^n$ is subject to 
the independent and identical distribution of $Q$,
the original Sanov theorem
evaluates the probability that the empirical distribution of $X$ is $P$.
That is, this probability goes to zero exponentially, 
and its exponent is given as the relative entropy $D(P\|Q) $.
Using this fact, we can discuss the above type of hypothesis testing in the classical case.
Hence, the authors of the references \cite{Notzel,Bjelakovic}
considered that the above statement is a quantum extension of Sanov theorem.
That is, although the original Sanov theorem focuses on the empirical distribution,
the quantum Sanov theorem by \cite{Notzel,Bjelakovic}
does not consider any quantum extension of an empirical distribution.

Therefore, it is possible to consider another quantum version of 
Sanov theorem by considering a quantum extension of an empirical distribution.
In this paper, we define a quantum analog of an empirical distribution
by using a density matrix $\rho$
and discusses the case when
the true state is the $n$-fold tensor product of a density matrix $\sigma$.
However, when we simply measure each system over a given basis repetitively,
we cannot distinguish two different states that 
have the same diagonal elements for the basis ${\cal B}$.
In fact, the paper \cite{H-G} studied
Sanov theorem based on the empirical distribution
based on this method, 
In this paper, to avoid the above problem,
we employ the simultaneous measurement of 
the empirical distribution and the decomposition given by Schur duality
because the $n$-fold tensor product of a density matrix 
$\sigma$ satisfies the symmetry given by Schur duality.
That is, the use of the decomposition given by Schur duality is essential
to avoid the above problem.
Then, we evaluate the probability with respect to the outcome 
of the above simultaneous measurement. 

Indeed, many results in information theory for the classical system
employs empirical distribution.
For example, the strong converse for channel resolvability for fixed input
\cite{WH}
employs employs empirical distribution.
To exploit such approaches, we need to establish a quantum extension of 
Sanov theorem based on a quantum extension of 
empirical distribution.
In this way, this kind of result is strongly desired in the area of quantum information theory.

The remaining part of this paper is organized as follows.
First, Section \ref{S2} reviews the relation between 
the preceding results by \cite{Notzel,Bjelakovic}
and the original Sanov theorem.
Section \ref{S3} presents our main results.
Section \ref{S4} gives preparations for our analysis.
Section \ref{S5} proves Lemma \ref{LL1}.
Section \ref{S4D} gives two technical lemmas for 
information quantities.
Section \ref{S7} proves Lemma \ref{LL2}.
Section \ref{S8} proves the direct part of our main result.
Section \ref{S9} proves the converse part of our main result.
Section \ref{S10} gives the discussion.

\section{Review of existing results}\Label{S2}
We focus on a $d$-dimensional quantum system ${\cal H}$
spanned by the basis $\{|j\rangle\}_{j=1}^d$.
We consider the $n$-fold tensor product system ${\cal H}^{\otimes n}$.
To state the existing Sanov theorem,
we consider a set $S$ of density matrices on ${\cal H}$
and a density matrix $\sigma$ on ${\cal H}$.

We consider a $d$-dimensional Hilbert space ${\cal H}$ and a state $\rho$ on $\cH$.
We assume the $n$-fold independently and identical distributed (iid) condition
and consider the following two hypotheses;
\begin{align}
H_1:& \hbox{The state is }\sigma^{\otimes n},\\
H_0:& \hbox{The state is }\rho^{\otimes n} \hbox{ with }\rho \in S.
\end{align}
To address this problem, we focus on the following quantity.
\begin{align}
\beta_{\epsilon,n}(S\|\sigma):=
\min_{0\le T\le I}
\Big\{\Tr T \sigma^{\otimes n} \Big|
\max_{\rho\in S} \Tr (I-T)\rho^{\otimes n}\le \epsilon
\Big\}.
\end{align}
Then, we define the quantum relative entropy
\begin{align}
D(\rho\|\sigma):= \Tr \rho (\log \rho -\log \sigma).
\end{align}
The relation 
\begin{align}
\lim_{n \to \infty}-\frac{1}{n} \log \beta_{\epsilon,n}(S\|\sigma)
=\inf_{\rho \in S}D(\rho\|\sigma)\Label{BVDY}
\end{align}
is known as quantum Sanov theorem \cite{Bjelakovic,Notzel}.
The above theorem means that 
the choice of the test does not depend on a state $\rho \in S$ in the sense of quantum Stein's lemma.
In fact, 
when our measurement is required 
to achieve the optimal performance 
in the sense of quantum Stein's lemma,
the measurement does not depend on a state $\rho \in S$
\cite{H-02}.

Next, we review the original Sanov theorem in the classical case, i.e., 
when all possible states are diagonal state with respect to 
a basis ${\cal B}:=\{|v_j\rangle\}_{j=1}^d$.
We denote the set of diagonal states for 
the basis ${\cal B}$ by ${\cal S}[{\cal B}]$.
Also, we assume that our observation is based on the measurement based on 
the basis ${\cal B}$.
When our system is ${\cal H}^{\otimes n}$,
our observation is given as a basis 
$|v[x^n]\rangle:=|v_{x_1}, \ldots,v_{x_n}\rangle$
with a data $x^n=(x_1, \ldots,x_n)$.
In this case, we obtain the data $x^n$.
The state $\sum_{j=1}^n \frac{1}{n}|v_{x_j}\rangle \langle v_{x_j}|$
is called its empirical state under the basis ${\cal B}$, 
and is denoted by $\rho_{x^n,{\cal B}}$.
In the $n$-fold tensor product case, 
we denote the set of empirical states under the basis ${\cal B}$
by ${\cal S}_n[{\cal B}] $.
Then, for an empirical state $\rho\in {\cal S}_n[{\cal B}]$,
we consider the projection 
\begin{align}
T_{\rho,{\cal B}}^n:= \sum_{x^n:\rho_{x^n,{\cal B}}=\rho }|v[x^n]\rangle \langle v[x^n]|.
\end{align}
When $\rho \notin {\cal S}_n[{\cal B}]$,
$T_{\rho,{\cal B}}^n$ is defined to be $0$.
For any diagonal state $\sigma \in {\cal S}[{\cal B}]$, 
the eigenvalue of 
$\sigma^{\otimes n} T_{\rho,{\cal B}}^n$
is calculated by using the operator norm as
\begin{align}
\|\sigma^{\otimes n} T_{\rho,{\cal B}}^n\|=
\prod_{j=1}^d 
\langle v_{x_j}| \sigma| v_{x_j} \rangle^{n \langle v_{x_j}| \rho| v_{x_j} \rangle}
=e^{n \Tr \rho \log \sigma}.\Label{NM9}
\end{align}
Since $\Tr T_{\rho,{\cal B}}^n \cong e^{n H(\rho)}$,
we have
\begin{align}
\Tr \sigma^{\otimes n} T_{\rho,{\cal B}}^n \cong 
e^{n \Tr \rho \log \sigma}e^{n H(\rho)}
=e^{-n D(\rho\|\sigma)},\Label{BB1}
\end{align}
which is known as Sanov theorem \cite{Bucklew,DZ,CT}.

Given a subset $S\subset {\cal S}({\cal B})$,
we denote the interior of $S$ by $\interior(S)$, i.e., 
\begin{align}
\interior(S):=\{\rho\in S|
\exists \epsilon>0, 
U_\epsilon(\rho) \subset S\},
\end{align}
where $U_\epsilon(\rho)$ is the $\epsilon$-neighborhood of $\rho$
in the set ${\cal S}[{\cal B}]$.
Since 
a state $\rho \in \interior(S)$ satisfies
\begin{align}
\Tr \rho^{\otimes n} 
\sum_{\rho' \in S}T_{\rho',{\cal B}}^n \to 1 \hbox{ as }n \to \infty,
\end{align}
the converse part of Stein's lemma \cite{ON} implies  
\begin{align}
\lim_{n\to \infty}\frac{-1}{n}\log \Tr \sigma^{\otimes n} 
\sum_{\rho' \in S}T_{\rho',{\cal B}}^n \le D(\rho\|\sigma).\Label{BB2}
\end{align}
Since the number of possible empirical distributions is only a polynomial for $n$, 
the combination of \eqref{BB1} and \eqref{BB2} implies
\begin{align}
\inf_{\rho \in S} D(\rho\|\sigma)
\le
\lim_{n\to \infty}\frac{-1}{n}\log \Tr \sigma^{\otimes n} 
\sum_{\rho' \in S}T_{\rho',{\cal B}}^n 
\le \inf_{\rho \in \interior(S)} D(\rho\|\sigma),
\end{align}
which is also considered as Sanov theorem \cite{Bucklew,DZ}.
When $\inf_{\rho \in \interior(S)}D(\rho\|\sigma)=\inf_{\rho \in S}D(\rho\|\sigma)$, 
we have
\begin{align}
\Tr \sigma^{\otimes n} 
\sum_{\rho' \in S}T_{\rho',{\cal B}}^n \cong 
e^{-n \inf_{\rho \in S}D(\rho\|\sigma)}\Label{NMA}.
\end{align}
Since the converse part of \eqref{BVDY}, i.e., its part of $\le$, is guaranteed by
Stein's lemma, 
\eqref{NMA} implies \eqref{BVDY} when 
the states in $S$ and $\sigma$ are commutative with each other.
However, we cannot say that
\eqref{BVDY} implies \eqref{NMA}.

In addition, when 
we simply measure all subsystems of ${\cal H}^{\otimes n}$
with the basis ${\cal B}$,
it is impossible to distinguish the pure state $\sigma$ and 
the diagonal state $\sigma_{{\cal B}}$ for the basis ${\cal B}$ with the diagonal element of $\sigma$.
Therefore, it is necessary to study the quantum extension of 
\eqref{NMA}
separately the existing result \eqref{BVDY}, which 
can distinguish the pure state $\sigma$ and 
the diagonal state $\sigma_{{\cal B}}$.

\section{Main result}\Label{S3}
To formulate another quantum extension of Sanov theorem, 
we consider the case when 
the state $\sigma$ is not necessarily diagonal with respect to the basis ${\cal B}$.
For this aim, we prepare several notations.
Similar to the references \cite{H-01,H-02,KW01,CM,CHM,OW,HM02a,HM02b,Notzel,H-24,AISW,H-q-text},
we employ Schur duality of $\cH^{\otimes n}$. 
A sequence of monotone-increasing non-negative integers 
$\blambda=(\lambda_1,\ldots,\lambda_d)$ is called Young index.
Although many references \cite{H-q-text,Group1,GW} define the Young index as 
monotone-decreasing non-negative integers,
we define it in the opposite way for notational convenience. 
We denote the set of Young indices $\blambda$ with the condition
$\sum_{j=1}^d\lambda_j=n$ by $Y_d^n$.
Also, the set of probability distributions $p=(p_j)_{j=1}^d$ with the condition
$p_1\le p_2 \le \ldots \le p_d$ by ${\cal P}_d$.
For any density $\rho$ on ${\cal H}$, 
the eigenvalues of $\rho$ forms an element of ${\cal P}_d$, which
is denoted by $p(\rho)$.
Let ${\cal P}_d^n$ be the set of elements $p$ of ${\cal P}_d$ such that
$p_j$ is an integer time of $1/n$.
We define the majorization relation $\succ$ in two elements of 
${\cal P}_d$ and $Y_d^n$.
For two element $p,p'$ of ${\cal P}_d$, we say that
$p \succ p'$ when
\begin{align}
\sum_{j=1}^k p_j \ge \sum_{j=1}^k  p_j'
\end{align}
for $k=1, \ldots,d-1$.
We define the majorization relation $\succ$ for two elements of $Y_d^n$
in the same way.

As explained in \cite[Section 6.2]{H-q-text}, we have
\begin{align}
\cH^{\otimes n}=
\bigoplus_{\blambda \in Y_d^n}
{\cal U}_{\blambda} \otimes {\cal V}_{\blambda}.
\end{align}
Here,
${\cal U}_{\blambda}$ expresses the irreducible space of $\SU(d)$
and ${\cal V}_{\blambda}$ expresses the irreducible space of  
the representation $\pi$ of the permutation group $\frS_n$.
We define 
$d_{\blambda}:= \dim {\cal U}_{\blambda}$.
As shown in \cite[(6.16)]{H-q-text}, the dimension $d_{\blambda}$ is upper bounded as
\begin{align}
d_{\blambda}\le (n+1)^{\frac{d(d-1)}{2}}.\Label{NMI}
\end{align}

We denote the projection to the space ${\cal U}_{\blambda} \otimes {\cal V}_{\blambda}$ by 
$P_{\blambda}$.
For $\blambda \in Y_d^n$ and $\rho \in {\cal S}_n[{\cal B}] $,
we define the projection
\begin{align}
T_{\blambda,\rho,{\cal B}}^n 
:= P_{\blambda}
T_{\rho,{\cal B}}^n .
\end{align}
The relation 
$T_{\blambda,\rho,{\cal B}}^n \neq 0$ holds
if and only if 
\begin{align}
\blambda \succ n p(\rho).\Label{ZXN}
\end{align}
This is because the weight vectors in the range of 
$T_{\rho,{\cal B}}^n $ are restricted to 
vectors whose weight is majorized by the highest weight 
$\blambda$ as \eqref{ZXN} \cite{GW,H-q-text,Group1}.
We also define the set
\begin{align}
{\cal R}[{\cal B}] &:=\{
(p,\rho) \in {\cal P}_d \times {\cal S}[{\cal B}]|
p \succ p(\rho)\}, \\
{\cal R}_n[{\cal B}] &:={\cal R}[{\cal B}]\cap
({\cal P}_d^n \times {\cal S}_n[{\cal B}]).
\end{align} 
That is, the decomposition $\{T_{\blambda,\rho,{\cal B}}^n\}_{
(\frac{\blambda}{n},\rho)\in {\cal R}_n[{\cal B}] }$
is the joint measurement of 
the Schur sampling and 
the measurement to get the empirical distribution based on the basis ${\cal B}$
\cite{KW01,CM,CHM,OW,HM02a,HM02b,AISW}.
Notice that 
\begin{align}
|{\cal R}_n[{\cal B}]|\le (n+1)^{2(d-1)} \Label{SOA}
\end{align}
because $|{\cal P}_d^n|, |{\cal S}_n[{\cal B}]|\le (n+1)^{d-1} $.
The following subset of ${\cal R}[{\cal B}]$ plays a key role in our quantum analog of Sanov theorem.
For $\rho \in {\cal S}[{\cal B}]$ and $r>0$, we define
\begin{align}
S_{\rho,r} :=
\big\{  (p',\rho')\in {\cal R}[{\cal B}]\big|
r(p',\rho'\|\rho)\le r \big\} ,
\end{align}
where $r(p',\rho'\|\rho):=D(\rho'\| \rho)+H(\rho')-H (p') $.
The set $S_{\rho,r}$ is characterized as follows.
\begin{lemma}\Label{LL1}
For $\rho \in {\cal S}[{\cal B}]$ and $r>0$, we have
\begin{align}
\Tr \rho^{\otimes n} 
\Big(\sum_{(p',\rho') \in S_{\rho,r}^c\cap {\cal R}_n[{\cal B}]}T_{np',\rho',{\cal B}}^n \Big)
\le (n+1)^{\frac{(d+4)(d-1)}{2}}
e^{-nr},\Label{ER1}
\end{align}
and
\begin{align}
\lim_{n\to \infty}\frac{-1}{n}\log \Tr \rho^{\otimes n} 
\Big(\sum_{(p',\rho') \in S_{\rho,r}^c\cap {\cal R}_n[{\cal B}]}T_{np',\rho',{\cal B}}^n \Big)
=r.\Label{IBT3}
\end{align}
\end{lemma}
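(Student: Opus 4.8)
The plan is to reduce the whole lemma to one identity for $\Tr\rho^{\otimes n}T_{np',\rho',{\cal B}}^n$ and then handle the two parts by a crude term count and a one‑term lower bound, respectively. Fix $(p',\rho')\in{\cal R}_n[{\cal B}]$. Since $\rho'$ is diagonal in ${\cal B}$, every product basis vector $|v[x^n]\rangle$ with $\rho_{x^n,{\cal B}}=\rho'$ is an eigenvector of $\rho^{\otimes n}$ with one and the same eigenvalue $\prod_{j}\langle v_{x_j}|\rho|v_{x_j}\rangle=e^{n\Tr\rho'\log\rho}$, by the computation behind \eqref{NM9}; hence $\rho^{\otimes n}T_{\rho',{\cal B}}^n=e^{n\Tr\rho'\log\rho}\,T_{\rho',{\cal B}}^n$. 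Moreover $T_{\rho',{\cal B}}^n$ is the orthogonal projection onto a permutation‑invariant subspace, so it commutes with every $P_{\blambda}$; therefore $T_{np',\rho',{\cal B}}^n=P_{np'}T_{\rho',{\cal B}}^n$ is again a projection and
\begin{align}
\Tr\rho^{\otimes n}T_{np',\rho',{\cal B}}^n=e^{n\Tr\rho'\log\rho}\,\Tr T_{np',\rho',{\cal B}}^n .
\end{align}
Combined with $\Tr\rho'\log\rho=-D(\rho'\|\rho)-H(\rho')$, i.e. $\Tr\rho'\log\rho+H(p')=-r(p',\rho'\|\rho)$, this identity drives both claims.

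For \eqref{ER1} I would bound $\Tr T_{np',\rho',{\cal B}}^n\le\Tr P_{np'}=d_{np'}\dim{\cal V}_{np'}\le(n+1)^{\frac{d(d-1)}{2}}e^{nH(p')}$, using \eqref{NMI} and the elementary bound $\dim{\cal V}_{np'}\le e^{nH(p')}$ (the number of standard tableaux is at most the multinomial coefficient). With the identity above this yields $\Tr\rho^{\otimes n}T_{np',\rho',{\cal B}}^n\le(n+1)^{\frac{d(d-1)}{2}}e^{-nr(p',\rho'\|\rho)}$, which on $S_{\rho,r}^c$ is at most $(n+1)^{\frac{d(d-1)}{2}}e^{-nr}$; summing over the at most $(n+1)^{2(d-1)}$ elements of ${\cal R}_n[{\cal B}]$ (by \eqref{SOA}) gives exactly $(n+1)^{2(d-1)+\frac{d(d-1)}{2}}e^{-nr}=(n+1)^{\frac{(d+4)(d-1)}{2}}e^{-nr}$, which is \eqref{ER1}. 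In particular $\frac{-1}{n}$ times the logarithm of the left‑hand side of \eqref{ER1} is $\ge r-o(1)$, giving the inequality ``$\ge r$'' in \eqref{IBT3}.

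For the opposite inequality in \eqref{IBT3} I would keep one term. Since $r(\cdot\|\rho)$ is continuous on the connected set ${\cal R}[{\cal B}]$ and vanishes at $(p(\rho),\rho)$, for any small $\delta>0$ with $r+\delta$ below $\sup_{{\cal R}[{\cal B}]}r(\cdot\|\rho)$ --- the range in which \eqref{IBT3} is meaningful --- there is an interior point $(p_*,\rho_*)\in{\cal R}[{\cal B}]$, with $\rho_*$ of full support and the majorization constraint strict, such that $r(p_*,\rho_*\|\rho)=r+\delta$. Because ${\cal R}_n[{\cal B}]$ is an $O(1/n)$‑net in ${\cal R}[{\cal B}]$ and $r(\cdot\|\rho)$ is locally Lipschitz near $(p_*,\rho_*)$, for all large $n$ there is $(p_n,\rho_n)\in{\cal R}_n[{\cal B}]$ with $r<r(p_n,\rho_n\|\rho)<r+2\delta$, hence $(p_n,\rho_n)\in S_{\rho,r}^c\cap{\cal R}_n[{\cal B}]$. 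Dropping all other terms and using the identity of the first paragraph,
\begin{align}
\Tr\rho^{\otimes n}\Big(\sum_{(p',\rho')\in S_{\rho,r}^c\cap{\cal R}_n[{\cal B}]}T_{np',\rho',{\cal B}}^n\Big)\ge\Tr\rho^{\otimes n}T_{np_n,\rho_n,{\cal B}}^n=e^{n\Tr\rho_n\log\rho}\,\Tr T_{np_n,\rho_n,{\cal B}}^n .
\end{align}
The range of $T_{np_n,\rho_n,{\cal B}}^n$ equals the weight‑$np(\rho_n)$ subspace of ${\cal U}_{np_n}$ tensored with ${\cal V}_{np_n}$, and this weight subspace is nonzero by \eqref{ZXN}, so $\Tr T_{np_n,\rho_n,{\cal B}}^n\ge\dim{\cal V}_{np_n}\ge(n+1)^{-c_d}e^{nH(p_n)}$ for a constant $c_d=c_d(d)$ (the standard polynomial lower bound on the dimensions of irreducible $\frS_n$‑representations). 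Hence the quantity inside the logarithm in \eqref{IBT3} is at least $(n+1)^{-c_d}e^{-nr(p_n,\rho_n\|\rho)}\ge(n+1)^{-c_d}e^{-n(r+2\delta)}$, so $\frac{-1}{n}$ times its logarithm is $\le r+2\delta+o(1)$; letting $\delta\downarrow0$ gives ``$\le r$'' and hence \eqref{IBT3}.

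The one‑shot identity, the term count \eqref{SOA}, and the two‑sided dimension estimates for ${\cal U}_{\blambda}$ and ${\cal V}_{\blambda}$ are all routine. The delicate point is the converse construction: one has to know that $r(\cdot\|\rho)$ really does take values just above $r$ on the \emph{discrete} family ${\cal R}_n[{\cal B}]$, which is why $(p_*,\rho_*)$ is chosen interior with the majorization constraint strict, so that passing to the $O(1/n)$ grid disturbs $r(\cdot\|\rho)$ by only $O(1/n)$ and preserves membership in ${\cal R}[{\cal B}]$.
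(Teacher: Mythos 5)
Your proof is correct and follows essentially the same route as the paper's: the same eigenvalue identity $e^{n\Tr \rho'\log\rho}$ behind \eqref{NM9}, the same two-sided dimension estimate of Lemma \ref{LL4} together with \eqref{NMI}, and the same counting via \eqref{SOA}, yielding the identical prefactor $(n+1)^{\frac{(d+4)(d-1)}{2}}$ for \eqref{ER1} and the matching polynomial lower bound for \eqref{IBT3}. If anything, your converse step is more explicit than the paper's: you actually exhibit a grid point $(p_n,\rho_n)\in S_{\rho,r}^c\cap{\cal R}_n[{\cal B}]$ with $r<r(p_n,\rho_n\|\rho)<r+2\delta$, whereas the paper leaves that density/continuity argument implicit.
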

This lemma is shown in Section \ref{S5}.

To study the case with an arbitrary state $\sigma$, 
we recall the sandwich R\'{e}nyi relative entropy \cite{MDS,WWD}
\begin{align}
D_{1+s}(\rho\|\sigma)&:= \frac{\phi(-s|\rho\|\sigma )}{s}, \\
\phi(s|\rho\|\sigma )&:= \log \Tr (\sigma^{\frac{s}{2(1-s)}}\rho\sigma^{\frac{s}{2(1-s)}})^{1-s}.
\end{align} 
We have the following lemma.
\begin{lemma}\Label{LB1}
The map $s \mapsto \phi(s|\rho\|\sigma )$ is convex.
\end{lemma}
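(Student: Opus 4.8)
The plan is to recognize that, writing $\alpha:=1-s$, one has $e^{\phi(s|\rho\|\sigma)}=\Tr(\sigma^{\frac{1-\alpha}{2\alpha}}\rho\,\sigma^{\frac{1-\alpha}{2\alpha}})^{\alpha}=:\tilde Q_{\alpha}(\rho\|\sigma)$, the $Q$-function of the sandwiched R\'enyi divergence. Since $s\mapsto 1-s$ is affine, the claim is equivalent to convexity of $\alpha\mapsto\log\tilde Q_{\alpha}(\rho\|\sigma)$ on the interval where it is finite (a known property of the sandwiched R\'enyi divergence). The classical case fixes the target inequality: when $\rho$ and $\sigma$ commute, $\tilde Q_{\alpha}(\rho\|\sigma)=\sum_{j}\sigma_{j}(\rho_{j}/\sigma_{j})^{\alpha}$ is a sum of log-convex functions of $\alpha$, so $\log\tilde Q_{\alpha}$ is convex by H\"older's inequality. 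The only genuine obstacle is non-commutativity.

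The approach is to remove it by the pinching map, a device already used elsewhere in this paper. Let $\sigma^{\otimes n}=\sum_{i}\mu_{i}E_{i}$ be the spectral decomposition, with $v_{n}\le(n+1)^{d-1}$ distinct eigenvalues, and set $\mathcal E_{n}(\omega):=\sum_{i}E_{i}\omega E_{i}$; then $\mathcal E_{n}$ is a unital CPTP map fixing $\sigma^{\otimes n}$, $\mathcal E_{n}(\rho^{\otimes n})$ commutes with $\sigma^{\otimes n}$, and $\rho^{\otimes n}\le v_{n}\mathcal E_{n}(\rho^{\otimes n})$. Combining multiplicativity $\tilde Q_{\alpha}(\rho^{\otimes n}\|\sigma^{\otimes n})=\tilde Q_{\alpha}(\rho\|\sigma)^{n}$, monotonicity of $\tilde Q_{\alpha}$ under the CPTP map $\mathcal E_{n}$ (valid for $\alpha\ge1$), the pinching inequality, and monotonicity of $X\mapsto\Tr X^{\alpha}$ on positive operators, one obtains
\[
v_{n}^{-\alpha}\,\tilde Q_{\alpha}(\rho\|\sigma)^{n}\ \le\ \Tr\big(\mathcal E_{n}(\rho^{\otimes n})\big)^{\alpha}\big(\sigma^{\otimes n}\big)^{1-\alpha}\ \le\ \tilde Q_{\alpha}(\rho\|\sigma)^{n}.
\]
Hence $f_{n}(\alpha):=\tfrac1n\log\Tr(\mathcal E_{n}(\rho^{\otimes n}))^{\alpha}(\sigma^{\otimes n})^{1-\alpha}\to\log\tilde Q_{\alpha}(\rho\|\sigma)$ as $n\to\infty$, since $n^{-1}\log v_{n}\to0$. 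But $\mathcal E_{n}(\rho^{\otimes n})$ and $\sigma^{\otimes n}$ commute, so $f_{n}$ is $1/n$ times a classical expression $\log\sum_{k}a_{k}^{\alpha}b_{k}^{1-\alpha}$, which is convex in $\alpha$ by H\"older; and a pointwise limit of convex functions is convex. Therefore $\alpha\mapsto\log\tilde Q_{\alpha}(\rho\|\sigma)$, equivalently $s\mapsto\phi(s|\rho\|\sigma)$, is convex, at least for $\alpha=1-s\ge1$ (i.e. $s\le0$, the range relevant to $D_{1+s}$).

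The step I expect to cost the most effort is the complementary range $\alpha\in(0,1)$ ($0<s<1$): there the pinching bound must be pushed through the operator-concave power $x\mapsto x^{\alpha}$ rather than an operator-convex one, and the sign flip of $1/(\alpha-1)$ in the divergence must be tracked, although the same limiting scheme still goes through; alternatively this range can be handled by a Hadamard three-line (Stein--Weiss) interpolation argument. One caveat about the interpolation route is worth noting: interpolation between Schatten $p$-norms naturally produces convexity in $1/\alpha$ rather than in $\alpha$, so to land on convexity in $s$ one must absorb the boundary-unitary factor $\sigma^{\frac{1-\alpha}{2\alpha}}$ into the analytic family and choose the interpolation path carefully --- which is precisely why the pinching reduction, which sidesteps this, is the cleaner option. (A direct verification that $g=e^{\phi}$ is log-convex, via a Cauchy--Schwarz estimate on the $s$-derivatives of $\Tr X(s)^{1-s}$ with $X(s)=\sigma^{\frac{s}{2(1-s)}}\rho\,\sigma^{\frac{s}{2(1-s)}}$, is also possible but computationally heavier.)
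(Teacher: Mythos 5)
Your strategy --- reduce to the commutative case by pinching, establish a two-sided bound with a polynomially growing prefactor, and pass convexity through the pointwise limit $n\to\infty$ --- is exactly the paper's proof: the paper cites the two-sided pinching bound \eqref{XNF} from the textbook reference to obtain \eqref{ZXL}, observes that the pinched quantity is classical and hence convex in $s$, and concludes by the fact that a pointwise limit of convex functions is convex. The only structural difference is that you re-derive the two-sided bound from the pinching inequality, multiplicativity and data processing instead of citing it.

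There is, however, a concrete gap in the one range that the paper actually needs. Lemma \ref{LB1} is invoked for $\phi(1-s|\sigma\|\rho)$ with $s\in(0,1)$, i.e.\ for sandwiched order in $(0,1)$ --- precisely the case you defer. For $\alpha\in(0,1)$ your two ingredients both point the same way: data processing gives $\tilde Q_\alpha(\mathcal{E}_n(\rho^{\otimes n})\|\sigma^{\otimes n})\ge\tilde Q_\alpha(\rho\|\sigma)^n$ because the sign of $1/(\alpha-1)$ flips, and the pinching inequality $\rho^{\otimes n}\le v_n\mathcal{E}_n(\rho^{\otimes n})$ pushed through the (trace-)monotone map $X\mapsto\Tr X^\alpha$ again yields a \emph{lower} bound on the pinched quantity; "pushing the bound through the operator-concave power," as you propose, does not change this. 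What is missing is the upper bound, and it requires a different inequality, e.g.\ the subadditivity $\Tr\bigl(\sum_iE_iXE_i\bigr)^\alpha\le\sum_i\Tr(E_iXE_i)^\alpha\le v_n\Tr X^\alpha$ for $\alpha\in(0,1)$ (valid because $E_iXE_i$ and $X^{1/2}E_iX^{1/2}\le X$ share their nonzero spectrum), applied to $X=(\sigma^{\otimes n})^{\frac{1-\alpha}{2\alpha}}\rho^{\otimes n}(\sigma^{\otimes n})^{\frac{1-\alpha}{2\alpha}}$ after commuting the pinching past functions of $\sigma^{\otimes n}$. With that, the sandwich $\tilde Q_\alpha(\rho\|\sigma)^n\le\Tr\,\mathcal{E}_n(\rho^{\otimes n})^\alpha(\sigma^{\otimes n})^{1-\alpha}\le v_n\,\tilde Q_\alpha(\rho\|\sigma)^n$ holds on $(0,1)$ and your limiting argument closes; alternatively one can simply quote \eqref{XNF}, which already covers this range. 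A minor additional caution: data processing for the sandwiched divergence is only guaranteed for $\alpha\ge1/2$, so on all of $(0,1)$ the lower bound is better obtained from the operator Jensen inequality $\mathcal{E}_n(Y)^\alpha\ge\mathcal{E}_n(Y^\alpha)$ for the operator-concave power, which gives $\Tr\,\mathcal{E}_n(Y)^\alpha\ge\Tr Y^\alpha$ directly.
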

This lemma is shown in Section \ref{S4}.
The quantum relative entropy $D(\rho\|\sigma)$ is characterized as
\begin{align}
D(\rho\|\sigma)= \lim_{s\to 0} D_{1+s}(\rho\|\sigma).
\end{align}
Now, we consider another quantum extension of relative entropy as
\begin{align}
\hat{D}(\rho\|\sigma):= \lim_{s\to 0} \frac{-\phi(1-s|\sigma\|\rho)}{s}.
\end{align} 
Since the map $s \mapsto -\phi(1-s|\sigma\|\rho)$ is concave due to 
Lemma \ref{LB1},
the value $\frac{-\phi(1-s|\sigma\|\rho)}{s}$ is monotone decreasing
for $s>0$, which implies
\begin{align}
\hat{D}(\rho\|\sigma) \ge \frac{-\phi(1-s|\sigma\|\rho)}{s}.\Label{YIM}
\end{align}
In particular, substituting $1/2$ into $s$, we have
\begin{align}
\hat{D}(\rho\|\sigma) \ge 
-2\phi(1/2|\sigma\|\rho)= - 2 \log 
\Tr |\rho^{1/2}\sigma^{1/2} |.\Label{YIM2}
\end{align}
When $\rho\neq \sigma$, 
the fidelity $\Tr|\rho^{1/2}\sigma^{1/2} |$
is strictly smaller than $1$. Then,
$\hat{D}(\rho\|\sigma)$ is strictly positive due to \eqref{YIM}.

When $\rho$ and $\sigma$ are commutative with each other,
$\hat{D}(\sigma\|\rho)
={D}(\sigma\|\rho)$.
To describe the exponents, we define the function;
\begin{align}
\hat{B}_e(r|\rho\|\sigma):= 
\sup_{s\in (0,1)}
\frac{-(1-s) r- \phi(1-s|\sigma \| \rho)}{s}.
\end{align}
Its limit is characterized as follows.

\begin{lemma}\Label{LL2}
We have the relations
\begin{align}
D(\rho\|\sigma) &\ge \hat{D}(\rho\|\sigma) \Label{SMO2}, \\
\lim_{r \to +0}\hat{B}_e(r|\rho\|\sigma)
&= \hat{D}(\rho\|\sigma) \Label{SMO}.
\end{align}
\end{lemma}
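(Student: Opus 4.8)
The plan is to establish the identity \eqref{SMO} first, since it is the softer of the two claims, and then to deduce the inequality \eqref{SMO2} from it together with an Araki--Lieb--Thirring comparison.

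\emph{For \eqref{SMO}.} I would start from the elementary rewriting
\begin{align}
\frac{-(1-s)r-\phi(1-s|\sigma\|\rho)}{s}=\frac{-\phi(1-s|\sigma\|\rho)}{s}-\frac{(1-s)r}{s},\qquad s\in(0,1),
\end{align}
and bound $\hat{B}_e(r|\rho\|\sigma)$ from both sides. Since $(1-s)r/s\ge 0$ for $r>0$, inequality \eqref{YIM} shows every value of the optimand is at most $\hat{D}(\rho\|\sigma)$, hence $\hat{B}_e(r|\rho\|\sigma)\le\hat{D}(\rho\|\sigma)$ for all $r>0$ and $\limsup_{r\to+0}\hat{B}_e(r|\rho\|\sigma)\le\hat{D}(\rho\|\sigma)$. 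For the reverse, fix $s_0\in(0,1)$: the value of the optimand at $s_0$ converges to $-\phi(1-s_0|\sigma\|\rho)/s_0$ as $r\to+0$, so $\liminf_{r\to+0}\hat{B}_e(r|\rho\|\sigma)\ge -\phi(1-s_0|\sigma\|\rho)/s_0$ for every $s_0$. Because $s\mapsto-\phi(1-s|\sigma\|\rho)/s$ is monotone decreasing on $(0,1)$ (this is precisely the monotonicity used to prove \eqref{YIM}, coming from Lemma \ref{LB1}), its supremum over $(0,1)$ equals its limit as $s\to+0$, namely $\hat{D}(\rho\|\sigma)$; taking the supremum over $s_0$ supplies the matching lower bound, and \eqref{SMO} follows.

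\emph{For \eqref{SMO2}.} The heart of the matter is to compare the sandwiched quantity $\phi(1-p|\sigma\|\rho)$ with a Petz-type quantity. Setting $v:=\frac{1-p}{2p}$ one has $\phi(1-p|\sigma\|\rho)=\log\Tr(\rho^{v}\sigma\rho^{v})^{p}$, and I would apply the Araki--Lieb--Thirring inequality in the range $0<p<1$ (to the positive operators $\rho^{2v}$ and $\sigma$) to obtain
\begin{align}
\Tr(\rho^{v}\sigma\rho^{v})^{p}\ge\Tr(\rho^{vp}\sigma^{p}\rho^{vp})=\Tr\,\sigma^{p}\rho^{1-p},
\end{align}
the last equality using $2vp=1-p$ and cyclicity of the trace. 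Hence $\phi(1-p|\sigma\|\rho)\ge\eta(p)$ with $\eta(p):=\log\Tr\,\sigma^{p}\rho^{1-p}$. The function $\eta$ is convex on $[0,1]$ (the Petz--R\'enyi exponent), satisfies $\eta(0)=\log\Tr\rho=0$, and has right derivative $\eta'(0)=\Tr\rho(\log\sigma-\log\rho)=-D(\rho\|\sigma)$, so the supporting line at $p=0$ gives $\eta(p)\ge -p\,D(\rho\|\sigma)$. Combining, $-\phi(1-p|\sigma\|\rho)/p\le -\eta(p)/p\le D(\rho\|\sigma)$ for every $p\in(0,1)$; letting $p\to+0$ in the definition of $\hat{D}$ yields \eqref{SMO2}. (If $D(\rho\|\sigma)=+\infty$ the claim is vacuous, and if it is finite then $\mathrm{supp}\,\rho\subseteq\mathrm{supp}\,\sigma$, which is exactly what makes the above derivative computation legitimate.)

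\emph{Main obstacle.} The passage \eqref{SMO} is just a limit interchange resting on monotonicity already in hand, so the real work is \eqref{SMO2}. There the delicate point is the first comparison — that the sandwiched R\'enyi divergence is dominated by the Petz one on $(0,1)$, equivalently $\phi(1-p|\sigma\|\rho)\ge\log\Tr\sigma^{p}\rho^{1-p}$ — together with the convexity and the endpoint derivative of $p\mapsto\log\Tr\sigma^{p}\rho^{1-p}$; these two facts are the natural candidates to isolate as the technical lemmas of Section \ref{S4D}, after which the rest is routine.
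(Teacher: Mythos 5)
Your proof is correct, but both halves take a genuinely different route from the paper's argument in Section \ref{S7}. For \eqref{SMO} the paper goes through Lemma \ref{L2}: it introduces the stationary point $s(r)$ of the optimand, shows via \eqref{PIV} that $s(r)$ is monotone and tends to $0$ with $r$, and evaluates the limit through the stationarity condition \eqref{ZMX}. Your two-sided squeeze --- the optimand is pointwise dominated by $-\phi(1-s|\sigma\|\rho)/s\le\hat D(\rho\|\sigma)$ for $r>0$, while fixing $s_0$ and letting $r\to+0$ recovers $\sup_{s_0}\bigl(-\phi(1-s_0|\sigma\|\rho)/s_0\bigr)=\hat D(\rho\|\sigma)$ by the same monotonicity that yields \eqref{YIM} --- is more elementary and avoids the $C^2$-smoothness and implicit-function bookkeeping behind $s(r)$ (the paper needs that machinery anyway for Theorem \ref{TH2}, so it gets \eqref{SMO} essentially for free from it). For \eqref{SMO2} both arguments hinge on the same intermediate inequality $\phi(1-s|\sigma\|\rho)\ge\log\Tr\,\sigma^{s}\rho^{1-s}$, i.e.\ domination of the sandwiched quantity by the Petz quantity (the paper's \eqref{LAP2}): the paper obtains it from the pinching relation \eqref{ZXL} combined with data processing for the Petz R\'enyi divergence, whereas you invoke Araki--Lieb--Thirring directly (you correctly use the reversed, $0<p\le1$, form $\Tr(B^{1/2}AB^{1/2})^{p}\ge\Tr(B^{p/2}A^{p}B^{p/2})$); your subsequent tangent-line bound $-\eta(p)/p\le D(\rho\|\sigma)$ is a slightly stronger, non-asymptotic version of the paper's final step $\lim_{s\to0}D_{1-s|P}(\rho\|\sigma)=D(\rho\|\sigma)$, and your support caveat is exactly the right one. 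Your route is self-contained and standard; the paper's reuses the pinching apparatus it has already set up for the main theorems. (Minor point: the technical lemmas of Section \ref{S4D} are in fact the $s(r)$/Legendre-duality statements needed for Theorem \ref{TH2}, not the sandwiched-versus-Petz comparison you anticipated.)
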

This lemma is shown in Section \ref{S7}.
Then, we have the following versions of quantum Sanov theorem.
\begin{theorem}\Label{TH1}
We fix a basis ${\cal B}$.
For states $\rho_n \in {\cal S}_n[{\cal B}]$
and Young index $\blambda_n \in Y_d^n$
with the condition \eqref{ZXN},
we have
\begin{align}
&
\frac{-1}{n}\log \Tr \sigma^{\otimes n} 
T_{\blambda_n,\rho_n,{\cal B}}^n \notag\\
\ge &
\frac{-(1-s) r_n- \phi(1-s|\sigma \| \rho)}{s}
-\frac{d(d+3)}{2sn}\log (n+1) ,
\Label{ZPD2}
\end{align}
where $r_n:= r(\frac{\blambda_n}{n},\rho_n\| \rho)$.
Therefore, when $r_n\to r$,
taking the limit $n\to \infty$ and the supremum for $s$, 
we have
\begin{align}
\liminf_{n\to \infty}
\frac{-1}{n}\log \Tr \sigma^{\otimes n} 
T_{\blambda_n,\rho_n,{\cal B}}^n 
\ge 
\hat{B}_e (r
|\rho\|\sigma).\Label{ZPD}
\end{align}
In particular, when $r_n\to 0$, we have
\begin{align}
\liminf_{n\to \infty}
\frac{-1}{n}\log \Tr \sigma^{\otimes n} 
T_{\blambda_n,\rho_n,{\cal B}}^n 
\ge \hat{D}(\rho\|\sigma).\Label{ZPD6}
\end{align}
\end{theorem}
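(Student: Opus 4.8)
The plan is to obtain the finite-$n$ bound \eqref{ZPD2} from a Hölder-type inequality producing the sandwiched Rényi quantity $\phi(1-s|\sigma\|\rho)$, combined with a polynomial lower bound on $\Tr[\rho^{\otimes n}T_{\blambda_n,\rho_n,{\cal B}}^n]$; the asymptotic claims \eqref{ZPD}, \eqref{ZPD6} then follow by sending $n\to\infty$, taking the supremum over $s$, and invoking the lemmas above. Write $T:=T_{\blambda_n,\rho_n,{\cal B}}^n$ and $N_n:=\Tr T$. I would first record the structural facts. The type class $\{x^n:\rho_{x^n,{\cal B}}=\rho_n\}$ is invariant under permuting tensor factors, so $T_{\rho_n,{\cal B}}^n$ commutes with the Schur projection $P_{\blambda_n}$ and $T$ is an orthogonal projection---indeed the projection onto ${\cal U}_{\blambda_n}[n\rho_n]\otimes{\cal V}_{\blambda_n}$, where ${\cal U}_{\blambda_n}[n\rho_n]$ denotes the weight-$n\rho_n$ subspace of the irreducible $\SU(d)$-module ${\cal U}_{\blambda_n}$. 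Since $\rho\in{\cal S}[{\cal B}]$, the operator $\rho^{\otimes n}$ is diagonal in $\{|v[x^n]\rangle\}$ and permutation invariant; hence it commutes with $T_{\rho_n,{\cal B}}^n$ and with $P_{\blambda_n}$, i.e. with $T$, and on the range of $T_{\rho_n,{\cal B}}^n$ (a fortiori on that of $T$) it acts as the scalar $\mu_n:=\prod_{j=1}^d\langle v_j|\rho|v_j\rangle^{n\rho_n(j)}=e^{-n(H(\rho_n)+D(\rho_n\|\rho))}$. Consequently $\Tr[\rho^{\otimes n}T]=\mu_nN_n$, while the definition of $r_n$ gives $e^{-nr_n}=\mu_n\,e^{nH(\blambda_n/n)}$. (If $\rho_n$ is not supported inside $\mathrm{supp}\,\rho$ then $r_n=+\infty$ and \eqref{ZPD2} is vacuous, and likewise if $\Tr[\sigma^{\otimes n}T]=0$; so one may assume $\mu_n>0$ and $\Tr[\sigma^{\otimes n}T]>0$.)

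The core step is to prove
\begin{align}
\big(\Tr[\rho^{\otimes n}T]\big)^{1-s}\big(\Tr[\sigma^{\otimes n}T]\big)^{s}\ \le\ e^{\,n\phi(1-s|\sigma\|\rho)},\qquad s\in(0,1).\Label{XHOL}
\end{align}
Set $D_n:=(\rho^{\otimes n})^{\frac{1-s}{2s}}\sigma^{\otimes n}(\rho^{\otimes n})^{\frac{1-s}{2s}}\ge0$. From the definition of $\phi$, together with $(\rho^{\otimes n})^{a}=(\rho^{a})^{\otimes n}$ and $\Tr[(X^{\otimes n})]=(\Tr X)^{n}$, one has $\phi(1-s|\sigma^{\otimes n}\|\rho^{\otimes n})=n\phi(1-s|\sigma\|\rho)$, hence $\Tr[D_n^{s}]=e^{n\phi(1-s|\sigma\|\rho)}$. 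Three observations then give \eqref{XHOL}. (i) Because $T$ is a projection, the $k$-th largest eigenvalue of the compression $TD_nT$ does not exceed that of $D_n$ (Courant--Fischer / Cauchy interlacing), and $t\mapsto t^s$ is increasing on $[0,\infty)$, so $\Tr[(TD_nT)^s]\le\Tr[D_n^{s}]$. (ii) Since $(\rho^{\otimes n})^{\frac{1-s}{2s}}$ commutes with $T$ and equals $\mu_n^{\frac{1-s}{2s}}$ on its range, one gets $TD_nT=\mu_n^{\frac{1-s}{s}}\,T\sigma^{\otimes n}T$, hence $\Tr[(TD_nT)^s]=\mu_n^{\,1-s}\,\Tr[(T\sigma^{\otimes n}T)^s]$. (iii) The positive operator $T\sigma^{\otimes n}T$ is supported on an $N_n$-dimensional space, so concavity of $t\mapsto t^s$ and Jensen's inequality give $\Tr[(T\sigma^{\otimes n}T)^s]\ge N_n^{\,1-s}\big(\Tr[T\sigma^{\otimes n}T]\big)^{s}=N_n^{\,1-s}\big(\Tr[\sigma^{\otimes n}T]\big)^{s}$. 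Chaining (i)--(iii) and using $\Tr[\rho^{\otimes n}T]=\mu_nN_n$ yields \eqref{XHOL}.

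Taking logarithms in \eqref{XHOL} gives $s\log\Tr[\sigma^{\otimes n}T]\le n\phi(1-s|\sigma\|\rho)+(1-s)\big(-\log\Tr[\rho^{\otimes n}T]\big)$, so it remains to upper bound $-\log\Tr[\rho^{\otimes n}T]$. By \eqref{ZXN} the weight $n\rho_n$ occurs in ${\cal U}_{\blambda_n}$, so $N_n=(\dim{\cal U}_{\blambda_n}[n\rho_n])(\dim{\cal V}_{\blambda_n})\ge\dim{\cal V}_{\blambda_n}$; feeding in the standard entropy lower bound $\dim{\cal V}_{\blambda_n}\ge(n+1)^{-d(d+3)/2}e^{nH(\blambda_n/n)}$ (hook-length formula / method of types for Young diagrams) and $e^{-nr_n}=\mu_ne^{nH(\blambda_n/n)}$, one finds $\Tr[\rho^{\otimes n}T]=\mu_nN_n\ge(n+1)^{-d(d+3)/2}e^{-nr_n}$, that is $-\log\Tr[\rho^{\otimes n}T]\le nr_n+\tfrac{d(d+3)}{2}\log(n+1)$. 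Substituting, using $0<1-s<1$ and $\log(n+1)\ge0$, and dividing by $-ns<0$, produces exactly \eqref{ZPD2}. For the limits: fixing $s\in(0,1)$ and letting $n\to\infty$ in \eqref{ZPD2} (the last term vanishes), with $r_n\to r$, gives $\liminf_n\frac{-1}{n}\log\Tr[\sigma^{\otimes n}T]\ge\frac{-(1-s)r-\phi(1-s|\sigma\|\rho)}{s}$; the supremum over $s$ is $\hat B_e(r|\rho\|\sigma)$ by definition, which is \eqref{ZPD}. If $r_n\to0$ the same argument gives $\liminf_n\ge\sup_{s\in(0,1)}\frac{-\phi(1-s|\sigma\|\rho)}{s}=\hat D(\rho\|\sigma)$, the last equality holding because $s\mapsto\frac{-\phi(1-s|\sigma\|\rho)}{s}$ is monotone decreasing by \eqref{YIM}; this is \eqref{ZPD6} (one may also invoke \eqref{SMO} of Lemma \ref{LL2} as $r\to+0$).

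I expect the inequality chain \eqref{XHOL} itself to be short---steps (i) and (iii), elementary as they are, are precisely what generate the $1-s$ versus $s$ split. The work I anticipate lies in two places: carefully justifying the operator-algebraic claims of the first paragraph (the pairwise commutation of $P_{\blambda_n}$, $T_{\rho_n,{\cal B}}^n$ and $\rho^{\otimes n}$, and that $\rho^{\otimes n}$ restricts to the scalar $\mu_n$ on ${\cal U}_{\blambda_n}[n\rho_n]\otimes{\cal V}_{\blambda_n}$); and, the main obstacle, establishing the dimension estimate $\dim{\cal V}_{\blambda_n}\ge(n+1)^{-d(d+3)/2}e^{nH(\blambda_n/n)}$ with precisely the stated polynomial exponent, since this is what fixes the $\frac{d(d+3)}{2sn}\log(n+1)$ correction term in \eqref{ZPD2}.
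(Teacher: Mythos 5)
Your proposal is correct, and its central inequality is obtained by a genuinely different mechanism than the paper's. The paper never touches $\sigma^{\otimes n}$ directly: it first pinches, writing $e^{\phi(1-s|{\cal E}_{{\cal B}}(\sigma^{\otimes n})\|\rho^{\otimes n})}$ as a purely classical sum over the joint eigenbasis $\{v_{\blambda,\rho',j}\}$ of $\Gamma_1(\rho^{\otimes n})$ and $\Gamma_1({\cal E}_{{\cal B}}\sigma^{\otimes n})$, drops all terms except the $(\blambda_n,\rho_n)$ block, applies $\sum_j y_j^s\ge(\sum_j y_j)^s$, and only at the end converts the pinched $\phi$ back to $\phi(1-s|\sigma\|\rho)$ via the pinching inequality \eqref{XNF}, paying an extra $(d-1)\log(n+1)$. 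You instead handle the non-commutativity head-on: compressing $D_n=(\rho^{\otimes n})^{\frac{1-s}{2s}}\sigma^{\otimes n}(\rho^{\otimes n})^{\frac{1-s}{2s}}$ by the projection $T$, using Cauchy interlacing to get $\Tr[(TD_nT)^s]\le\Tr[D_n^s]$, exploiting that $\rho^{\otimes n}$ acts as the scalar $\mu_n$ on $\mathrm{ran}(T)$, and finishing with Jensen on the $N_n$ eigenvalues of $T\sigma^{\otimes n}T$. The two arguments share the same skeleton (restrict to one block, then a H\"older/Jensen split into exponents $1-s$ and $s$, then the lower bound on $\dim{\cal V}_{\blambda_n}$), but your version is more self-contained---it needs no pinching map and no appeal to \eqref{XNF}---and it actually yields the sharper polynomial correction $\frac{d(d+1)}{2s n}\log(n+1)$, since the $(d-1)$ pinching overhead disappears; the paper's route has the advantage that the objects $\Gamma_1$, ${\cal E}_{{\cal B}}$ and the eigenvectors $v_{\blambda,\rho,j}$ are reused verbatim in the converse proof of Theorem \ref{TH2}. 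The step you flag as the main obstacle, the lower bound $\dim{\cal V}_{\blambda_n}\ge e^{nH(\blambda_n/n)}(n+1)^{-d(d+1)/2}$, is exactly Lemma \ref{LL4} of the paper (with a better exponent than the $d(d+3)/2$ you assumed), so nothing is missing. Your concluding limit arguments, including the identification $\sup_{s\in(0,1)}\frac{-\phi(1-s|\sigma\|\rho)}{s}=\hat{D}(\rho\|\sigma)$ via the monotonicity behind \eqref{YIM}, match the paper's use of Lemma \ref{LL2}.
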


This theorem is shown in Section \ref{S8}.
The projection $T_{\blambda_n,\rho_n,{\cal B}}^n $
with $r_n\to 0$ takes a similar role of the empirical distribution 
in the original Sanov theorem \eqref{BB1}.
To get the same behavior as \eqref{BB1}, we need the opposite inequality.
For the opposite inequality, 
as explained in the following theorem,
we need to focus on the summand in the set 
$S_{\rho,r}\cap {\cal R}_n[{\cal B}]$.

\begin{theorem}\Label{TH2}
For $\rho \in {\cal S}[{\cal B}]$ and $r>0$, we have
\begin{align}
\lim_{n\to \infty}\frac{-1}{n}\log \Tr \sigma^{\otimes n} 
\Big(\sum_{(p',\rho') \in S_{\rho,r}\cap {\cal R}_n[{\cal B}]}T_{np',\rho',{\cal B}}^n \Big)
=\hat{B}_e(r|\rho\|\sigma).
\Label{ER2}
\end{align}
\end{theorem}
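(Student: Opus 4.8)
\emph{Strategy.} We prove the two matching inequalities that pin down the limit; the bound $\ge$ follows quickly from Theorem~\ref{TH1}, while the bound $\le$ carries the content. For $\ge$: every $(p',\rho')\in S_{\rho,r}\cap{\cal R}_n[{\cal B}]$ obeys $np'\in Y_d^n$, $\rho'\in{\cal S}_n[{\cal B}]$, the admissibility condition \eqref{ZXN}, and $r(p',\rho'\|\rho)\le r$, so \eqref{ZPD2} (with reference state $\rho$) together with $\tfrac{1-s}{s}>0$ gives, for each $s\in(0,1)$,
\[
\Tr\sigma^{\otimes n}T_{np',\rho',{\cal B}}^n\le(n+1)^{\frac{d(d+3)}{2s}}\exp\!\Big(n\,\tfrac{(1-s)r+\phi(1-s|\sigma\|\rho)}{s}\Big).
\]
Summing over the at most $(n+1)^{2(d-1)}$ such pairs \eqref{SOA}, taking $-\tfrac1n\log$, then $n\to\infty$, then $\sup_{s\in(0,1)}$, yields $\liminf_n(-\tfrac1n)\log\Tr\sigma^{\otimes n}\big(\sum_{(p',\rho')\in S_{\rho,r}\cap{\cal R}_n[{\cal B}]}T_{np',\rho',{\cal B}}^n\big)\ge\hat{B}_e(r|\rho\|\sigma)$.

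\textbf{Upper bound: choosing a near-optimal outcome.} Since the summands are nonnegative it suffices to produce, for each $\delta>0$, one sequence $(p_n,\rho_n)\in S_{\rho,r}\cap{\cal R}_n[{\cal B}]$ with $\limsup_n(-\tfrac1n)\log\Tr\sigma^{\otimes n}T_{np_n,\rho_n,{\cal B}}^n\le\hat{B}_e(r-\delta|\rho\|\sigma)$, and then let $\delta\to0$ using the monotonicity and continuity of $r\mapsto\hat{B}_e(r|\rho\|\sigma)$. If $r-\delta\ge D(\sigma\|\rho)$ then $\hat{B}_e(r-\delta|\rho\|\sigma)=0$; taking $\rho_n\to$ the diagonal part of $\sigma$ in ${\cal B}$ and $\blambda_n/n\to p(\sigma)$ (admissible by Schur--Horn) gives $r(p_n,\rho_n\|\rho)\to D(\sigma\|\rho)\le r-\delta$, while joint concentration of Schur sampling and the ${\cal B}$-type for $\sigma^{\otimes n}$ gives $\Tr\sigma^{\otimes n}T_{\blambda_n,\rho_n,{\cal B}}^n\to1$, i.e.\ exponent $0$. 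Otherwise the supremum in $\hat{B}_e(r-\delta|\rho\|\sigma)$ is attained at an interior $s^*\in(0,1)$; put $\omega:=\rho^{\frac{1-s^*}{2s^*}}\sigma\rho^{\frac{1-s^*}{2s^*}}$, so $\Tr\omega^{s^*}=e^{\phi(1-s^*|\sigma\|\rho)}$, and $\tau:=\omega^{s^*}/\Tr\omega^{s^*}$; let $\rho_n\in{\cal S}_n[{\cal B}]$ be a closest approximant to the diagonal part of $\tau$ in ${\cal B}$ and $\blambda_n\in Y_d^n$ a closest Young index to $n\,\mathrm{spec}(\tau)$. By Schur--Horn the spectrum of $\tau$ majorizes its ${\cal B}$-diagonal, so $(\blambda_n/n,\rho_n)\in{\cal R}_n[{\cal B}]$ for large $n$, and from $r(\blambda_n/n,\rho_n\|\rho)=-\Tr\rho_n\log\rho-H(\blambda_n/n)\to-\Tr\tau\log\rho-H(\tau)=D(\tau\|\rho)$ (recalling $\log\rho$ is ${\cal B}$-diagonal) together with the stationarity relation for $s^*$---a standard derivative computation for the sandwiched R\'enyi divergence gives $D(\tau\|\rho)=r-\delta$---we get $(\blambda_n/n,\rho_n)\in S_{\rho,r}$ for large $n$.

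\textbf{Upper bound: the key estimate.} In the interior case it remains to bound $\Tr\sigma^{\otimes n}T_{\blambda_n,\rho_n,{\cal B}}^n$ from below. Since $\sigma^{\otimes n}$ commutes with each $P_\blambda$ and $T_{\rho_n,{\cal B}}^n$ is the projection onto the ${\cal B}$-weight space of weight $\mu=(n\langle v_j|\rho_n|v_j\rangle)_j$, writing $\varphi_\blambda$ for the $\mathrm{GL}(d)$-representation on ${\cal U}_\blambda$ and using $\varphi_\blambda(\sigma)=\varphi_\blambda(\rho)^{-\frac{1-s^*}{2s^*}}\varphi_\blambda(\omega)\varphi_\blambda(\rho)^{-\frac{1-s^*}{2s^*}}$ with $\varphi_\blambda(\rho)$ acting as a scalar on each ${\cal B}$-weight space, one gets
\[
\Tr\sigma^{\otimes n}T_{\blambda_n,\rho_n,{\cal B}}^n=(\dim{\cal V}_{\blambda_n})\;e^{-\frac{1-s^*}{s^*}\sum_j\mu_j\log\langle v_j|\rho|v_j\rangle}\;\Tr_{{\cal U}_{\blambda_n}}\!\big[\varphi_{\blambda_n}(\omega)\,Q_\mu\big],
\]
with $Q_\mu$ the ${\cal B}$-weight-$\mu$ projection on ${\cal U}_{\blambda_n}$. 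Bounding $\Tr_{{\cal U}_{\blambda_n}}[\varphi_{\blambda_n}(\omega)Q_\mu]\ge\|\varphi_{\blambda_n}(\omega)\|\,\langle hw|Q_\mu|hw\rangle$, where $|hw\rangle$ is the highest-weight vector of ${\cal U}_{\blambda_n}$ in the eigenbasis $\{|e_i\rangle\}$ of $\omega$, we have $\|\varphi_{\blambda_n}(\omega)\|=\prod_j\mathrm{spec}(\omega)_j^{(\blambda_n)_j}$ (rearrangement over the Weyl orbit), whose logarithm is $\tfrac{n}{s^*}\big(\phi(1-s^*|\sigma\|\rho)-H(\blambda_n/n)\big)+o(n)$ because $\blambda_n/n\to\mathrm{spec}(\tau)$ and $\mathrm{spec}(\tau)_j=\mathrm{spec}(\omega)_j^{s^*}e^{-\phi(1-s^*|\sigma\|\rho)}$; moreover $|hw\rangle$ is a Schur-projected symmetrization of $\bigotimes_i|e_i\rangle^{\otimes(\blambda_n)_i}$, so measuring it in ${\cal B}^{\otimes n}$ gives a ${\cal B}$-type concentrated, with $O(\sqrt n)$ fluctuations over polynomially many types, at $n\sum_i\mathrm{spec}(\tau)_i|\langle v_j|e_i\rangle|^2=n\langle v_j|\tau|v_j\rangle=\mu_j$, whence $\langle hw|Q_\mu|hw\rangle\ge(n+1)^{-O(1)}$. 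Using $\dim{\cal V}_{\blambda_n}\ge(n+1)^{-O(1)}e^{nH(\blambda_n/n)}$ and $\sum_j\mu_j\log\langle v_j|\rho|v_j\rangle=n\Tr\rho_n\log\rho=-n\big(r(\blambda_n/n,\rho_n\|\rho)+H(\blambda_n/n)\big)$, the three occurrences of $H(\blambda_n/n)$ cancel (coefficients $1+\tfrac{1-s^*}{s^*}-\tfrac1{s^*}=0$), leaving
\[
\Tr\sigma^{\otimes n}T_{\blambda_n,\rho_n,{\cal B}}^n\ge(n+1)^{-O(1)}\exp\!\Big(-n\,\tfrac{-(1-s^*)\,r(\blambda_n/n,\rho_n\|\rho)-\phi(1-s^*|\sigma\|\rho)}{s^*}\Big);
\]
taking $-\tfrac1n\log$ and $\limsup_n$ gives exactly $\hat{B}_e(r-\delta|\rho\|\sigma)$ (since $s^*$ is the maximizer and $r(\blambda_n/n,\rho_n\|\rho)\to r-\delta$), and $\delta\to0$ completes the proof.

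\textbf{Main obstacle.} The crux is the Schur--Weyl concentration estimate: the ${\cal B}$-weight content of the highest-weight vector of ${\cal U}_{\blambda_n}$ (taken in the $\omega$-eigenbasis) concentrates, with only polynomially many weights in an $O(\sqrt n)$ window, precisely at $\mu=(n\langle v_j|\tau|v_j\rangle)_j$---this single weight must simultaneously carry a $(n+1)^{-O(1)}$ share of $\Tr\varphi_{\blambda_n}(\omega)=s_{\blambda_n}(\mathrm{spec}\,\omega)$ and keep $r(\blambda_n/n,\rho_n\|\rho)$ within $o(1)$ of $r-\delta$. The remaining points are routine: the endpoint analysis indicated above (the saturated regime $r\ge D(\sigma\|\rho)$ and maximizers $s^*$ tending to $0$ or $1$), the commuting case $[\rho,\sigma]=0$, which reduces to classical Sanov with $\varphi_\blambda(\omega)$ diagonal, and the case of rank-deficient $\rho$, where one works throughout on $\mathrm{supp}(\rho)$.
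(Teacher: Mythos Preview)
Your lower bound via Theorem~\ref{TH1} is correct and matches the paper. The upper bound, however, follows a very different route from the paper and has a real gap at exactly the step you flag as the ``main obstacle.''

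The paper's converse is indirect: it never constructs an explicit near-optimal sequence $(p_n,\rho_n)$. Instead, reversing the inequalities of \eqref{NZP} yields the estimate \eqref{ZCO}, which after \eqref{ZXL} and polynomial counting gives the duality inequality $\phi(1-s|\sigma\|\rho)\le\sup_{r\ge0}\bigl[-(1-s)r-sR(r)\bigr]$ for the upper exponent $R(r)$ of \eqref{ER5}. Since the direct part already forces $R(r)\ge\hat B_e(r|\rho\|\sigma)$, setting $s=s(r_0)$ and invoking Lemma~\ref{L4} shows the supremum is attained at $r=r_0$ with equality throughout, whence $R(r_0)=\hat B_e(r_0|\rho\|\sigma)$. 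No representation-theoretic concentration estimate is needed; the argument is pure Legendre duality between the already-established upper and lower bounds.

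Your constructive route would succeed if the $\mathcal B$-weight distribution of the highest-weight vector of $\mathcal U_{\blambda_n}$ (taken in the $\omega$-eigenbasis) really placed mass $\ge(n+1)^{-O(1)}$ on the specific target weight $\mu$, but this is neither proven here nor standard. For $d=2$ that distribution is given by squared Wigner $d$-functions $|d^j_{m,j}(\theta)|^2$, whose pointwise large-$j$ asymptotics already require care (Airy/oscillatory regimes near the edge); for $d\ge3$ the analogous statement is a basis-changed Kostka-type estimate with no off-the-shelf reference delivering the pointwise polynomial lower bound you need at the \emph{specific} $\mu$. Even granting concentration in an $O(\sqrt n)$ window, extracting \emph{some} weight $\mu'$ with mass $\ge(n+1)^{-O(1)}$ and then showing $r(\blambda_n/n,\rho'_n\|\rho)\to r-\delta$ uniformly is an additional step you have not supplied. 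Your stationarity identity $D(\tau\|\rho)=r-\delta$ is correct (differentiating $\Tr\omega(s)^s$ with $\omega(s)=\rho^{(1-s)/2s}\sigma\rho^{(1-s)/2s}$ and using trace cyclicity gives $s\phi'-\phi=D(\tau\|\rho)$), so the scaffolding is sound; but the weight-concentration step is a genuine gap, and the paper's duality argument sidesteps it entirely.
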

This theorem is shown in Section \ref{S9}.

Combining Lemma \ref{LL2} and Theorem \ref{TH1}, we have the following corollary.
\begin{corollary}\Label{Cor}
For any $0<R< \hat{D}(\rho\|\sigma)$, 
there exists a small $r>0$ such that
$\hat{B}_e(r|\rho\|\sigma)\ge R$, i.e., 
\begin{align}
\lim_{n\to \infty}\frac{-1}{n}\log \Tr \sigma^{\otimes n} 
\Big(\sum_{(p_n,\rho_n) \in S_{\rho,r}\cap {\cal R}_n[{\cal B}]}T_{np_n,\rho_n,{\cal B}}^n \Big)
\ge R.
\end{align}
\end{corollary}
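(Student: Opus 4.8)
The plan is to combine the limiting identity \eqref{SMO} of Lemma \ref{LL2}, which selects the radius $r$, with the single-index lower bound \eqref{ZPD2} of Theorem \ref{TH1}, applied to every term of the sum, and finally to invoke Theorem \ref{TH2} so as to upgrade the resulting $\liminf$ to a genuine limit.

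First I would fix $r$. By \eqref{SMO}, $\lim_{r\to +0}\hat{B}_e(r|\rho\|\sigma)=\hat{D}(\rho\|\sigma)>R$, so there is some threshold below which $\hat{B}_e(r|\rho\|\sigma)\ge R$. (One may also observe directly that $r\mapsto\hat{B}_e(r|\rho\|\sigma)$ is non-increasing, being a supremum over $s\in(0,1)$ of the maps $r\mapsto\bigl(-(1-s)r-\phi(1-s|\sigma\|\rho)\bigr)/s$, each of which is affine with negative slope.) Fix such an $r$.

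Next I would bound each summand. For $(p_n,\rho_n)\in S_{\rho,r}\cap{\cal R}_n[{\cal B}]$ the pair $(\blambda_n,\rho_n):=(np_n,\rho_n)$ satisfies \eqref{ZXN}, and $r_n:=r(\tfrac{\blambda_n}{n},\rho_n\|\rho)=r(p_n,\rho_n\|\rho)\le r$. Since $-(1-s)/s<0$ for $s\in(0,1)$, inequality \eqref{ZPD2} yields, for every $s\in(0,1)$ and uniformly in $(p_n,\rho_n)$,
\begin{align}
&\frac{-1}{n}\log \Tr \sigma^{\otimes n} T_{np_n,\rho_n,{\cal B}}^n \notag\\
\ge{}& \frac{-(1-s) r- \phi(1-s|\sigma \| \rho)}{s} -\frac{d(d+3)}{2sn}\log (n+1).
\end{align}
Because the summands $\Tr\sigma^{\otimes n}T_{np_n,\rho_n,{\cal B}}^n$ are nonnegative and, by \eqref{SOA}, there are at most $(n+1)^{2(d-1)}$ of them, the full sum is at most $(n+1)^{2(d-1)}$ times the largest term, so $\frac{-1}{n}\log$ of the sum is bounded below by the right-hand side above minus a further $\tfrac{2(d-1)}{n}\log(n+1)$. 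For each fixed $s$ all the $\tfrac{\log(n+1)}{n}$ corrections vanish as $n\to\infty$, whence $\liminf_n\frac{-1}{n}\log\Tr\sigma^{\otimes n}\bigl(\sum_{(p_n,\rho_n)\in S_{\rho,r}\cap{\cal R}_n[{\cal B}]}T_{np_n,\rho_n,{\cal B}}^n\bigr)\ge\bigl(-(1-s)r-\phi(1-s|\sigma\|\rho)\bigr)/s$; taking the supremum over $s\in(0,1)$ gives $\liminf_n\ge\hat{B}_e(r|\rho\|\sigma)\ge R$. Finally, Theorem \ref{TH2} asserts that this limit exists and equals $\hat{B}_e(r|\rho\|\sigma)$, which turns the $\liminf$ into the stated $\lim$ and completes the argument.

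The only delicate point is the order of the two limiting operations in the last step: the polynomial correction $\tfrac{d(d+3)}{2sn}\log(n+1)$ diverges as $s\to0$, so one must let $n\to\infty$ for each fixed $s$ and only afterward optimize over $s$; interchanging $\sup_s$ with $\liminf_n$ would be illegitimate. The remaining manipulations are routine accounting with the polynomial prefactor \eqref{SOA}.
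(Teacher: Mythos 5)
Your proposal is correct and follows essentially the same route as the paper: choose $r$ via \eqref{SMO} so that $\hat{B}_e(r|\rho\|\sigma)\ge R$, lower-bound the sum using the term-wise bound \eqref{ZPD2} together with the polynomial cardinality bound \eqref{SOA}, and invoke Theorem \ref{TH2} for the existence of the limit. The only remark is a slight redundancy: once Theorem \ref{TH2} is invoked, the display already equals $\hat{B}_e(r|\rho\|\sigma)$, so the intermediate $\liminf$ derivation from \eqref{ZPD2} is not strictly needed, though your caution about taking $n\to\infty$ before $\sup_s$ is well placed.
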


When $\rho\neq \sigma$, 
$\hat{D}(\rho\|\sigma)$ is strictly positive.
Therefore, when 
we apply the measurement $\{T_{\blambda_n,\rho_n,{\cal B}}^n\}_{
(\frac{\blambda_n}{n},\rho_n)\in {\cal R}_n[{\cal B}] }$,
we can distinguish the state $\rho_n \in {\cal S}[{\cal B}]$ from any other states $\sigma$.
Hence, the information $(\frac{\blambda_n}{n},\rho_n)\in {\cal R}_n[{\cal B}] $ 
can be considered as a quantum version of 
an empirical distribution.
In fact, when we employ only the information $\rho_n \in {\cal R}_n[{\cal B}] $,
it contains the case when the Young index $\blambda_n \in Y_{d}^n$ has a small exponent
for the true state $\sigma$.
Therefore, it is important to use the Young index $\blambda_n$ as well.

\begin{example}
As a simple example, we consider the case when
$\sigma$ is a pure state
$|\psi_1\rangle\langle \psi_1| $.
\begin{align}
\phi(s|\rho\|\sigma )
:=& \log \Tr (
|\psi_1\rangle\langle \psi_1|
\rho
|\psi_1\rangle\langle \psi_1|
)^{1-s} \notag\\
=&
(1-s) \log \langle \psi_1|\rho |\psi_1\rangle
\end{align} 
Hence, the equality in \eqref{YIM2} holds, i.e.,
\begin{align}
\hat{D}(\sigma\|\rho)= 
-\log \langle \psi_1|\rho |\psi_1\rangle
=- 2 \log 
\Tr |\rho^{1/2}\sigma^{1/2} |
\end{align} 
while $D(\sigma\|\rho)= \infty$ unless $\sigma=\rho$.
This example shows that the equality in 
\eqref{SMO2}
does not hold in general.

Further, when $\rho$ is the diagonal state $\sigma_{\cal B}$,
$\hat{D}(\sigma\|\sigma_{\cal B})$ is strictly positive.
Hence, even though $\sigma_{\cal B}$ has the same diagonal element 
under the basis ${\cal B}$ as $\sigma$,
$\sigma$ and $\sigma_{\cal B}$ can be distinguished.
When the true state is a pure state $\sigma^{\otimes n}$,
the observed Young index $\lambda_n$ is always
$(n,0, \ldots, 0)$.
But, 
when the true state is the diagonal state $\sigma_{\cal B}^{\otimes n}$,
the probability that the observed Young index $\lambda_n$ is 
$(n,0, \ldots, 0)$ is quite small.
These two states can be distinguished by the observed Young index $\lambda_n$. 
\end{example}

\section{Preparation}\Label{S4}
We prepare several technical tools for our discussion.
Our proof employs 
the pinching map ${\cal E}_{{\cal B}}$, which is defined as
\begin{align}
{\cal E}_{{\cal B}}(X):= \sum_{\rho \in {\cal S}_n[{\cal B}]}T_{\rho,{\cal B}}^n X 
T_{\rho,{\cal B}}^n ,
\end{align}
where $X$ is a Hermitian matrices over ${\cal H}^{\otimes n}$.
Then, we introduce two TP-CP maps. For this aim,
we define ${\cal K}_n:= \bigoplus_{\blambda \in Y_d^n}
{\cal U}_{\blambda}$.
We define 
the TP-CP map $\Gamma_1$ from ${\cal S}(\cH^{\otimes n})$
to ${\cal S}({\cal K}_n)$ 
and
the TP-CP map $\Gamma_2$ from ${\cal S}({\cal K}_n)$ 
to ${\cal S}(\cH^{\otimes n})$
as
\begin{align}
\Gamma_1(\rho)&:= \sum_{\blambda \in Y_d^n}
\Tr_{{\cal V}_{\blambda}} P_{\blambda} \rho P_{\blambda} \\
\Gamma_2(\rho')&:= \sum_{\blambda \in Y_d^n}
\tilde{P}_{\blambda} \rho' \tilde{P}_{\blambda} \otimes \rho_{\blambda,\mix},
\end{align}
where $\tilde{P}_{\blambda}$ is the projection to 
${\cal U}_{\blambda}$ and 
$\rho_{\blambda,\mix}$ is the completely mixed state on 
${\cal U}_{\blambda}$.
When a state $\tilde{\rho}$ on the system ${\cal S}(\cH^{\otimes n})$
is permutation-invariant,
it has the following form;
\begin{align}
\tilde{\rho}&=
\bigoplus_{\blambda \in Y_d^n}
\tilde{\rho}_{\blambda} \otimes \rho_{\blambda,\mix}, \Label{BBR2}.
\end{align}
Then, we have
\begin{align}
\Gamma_2\circ \Gamma_1(\tilde{\rho})=\tilde{\rho}.
\end{align}
Since the states $\rho^{\otimes n}$ and 
${\cal E}_{{\cal B}}(\sigma^{\otimes n})$ are permutation-invariant, we have
\begin{align}
D_{1-s}(\rho^{\otimes n} \|  {\cal E}_{{\cal B}}(\sigma^{\otimes n}))
=
D_{1-s}(\Gamma_1(\rho^{\otimes n}) \| 
\Gamma_1({\cal E}_{{\cal B}}(\sigma^{\otimes n})).
\end{align}

The reference \cite[Next equation of (3.156)]{Springer} showed that
\begin{align}
& -\phi(1-s|{\cal E}_{{\cal B}}(\sigma^{\otimes n})\|
\rho^{\otimes n} )+(1-s)(d-1)\log (n+1) \notag\\
\ge &
-\phi(1-s|\sigma^{\otimes n}\|\rho^{\otimes n})
\ge
-\phi(1-s|{\cal E}_{{\cal B}}(\sigma^{\otimes n})\|\rho^{\otimes n} ).
\Label{XNF}
\end{align}
Since 
\begin{align}
\phi(1-s|\sigma^{\otimes n}\|\rho^{\otimes n})
=
n \phi(1-s|\sigma\|\rho),
\end{align}
we have
\begin{align}
\lim_{n\to \infty}\frac{1}{n}
\phi(1-s|{\cal E}_{{\cal B}}(\sigma^{\otimes n})\|\rho^{\otimes n} )
=\phi(1-s|\sigma\|\rho).
\Label{ZXL}
\end{align}
This relation shows Lemma \ref{LB1} as follows.

\begin{proof}[Proof of Lemma \ref{LB1}]
In the classical case, the function $s \mapsto \phi(1-s|\sigma\|\rho)$ is a convex function, which implies that
the function 
$s \mapsto\frac{1}{n}\phi(1-s|{\cal E}_{{\cal B}}(\sigma^{\otimes n})\|\rho^{\otimes n} )$ is a convex function for general states
$\sigma$ and $\rho$.
Hence, Eq. \eqref{ZXL} guarantees that the function 
$s \mapsto\phi(1-s|\sigma\|\rho)
)$ is a convex function for general states
$\sigma$ and $\rho$.
This shows Lemma \ref{LB1}.
\end{proof}

Since $T_{\blambda,\rho,{\cal B}}^n $ is permutation-invariant,
it is written as 
$\tilde{T}_{\blambda,\rho,{\cal B}} \otimes I_{\blambda} $,
where $I_{\blambda} $ is the identity operator on 
${\cal V}_{\blambda}$.
Also, $\Gamma_1({\cal E}_{{\cal B}}(\sigma^{\otimes n}))$ is commutative with
$\tilde{T}_{\blambda,\rho,{\cal B}}$.
We denotes the eigenvectors of 
$\tilde{T}_{\blambda,\rho,{\cal B}}
\Gamma_1({\cal E}_{{\cal B}}(\sigma^{\otimes n}))
\tilde{T}_{\blambda,\rho,{\cal B}}$ with non-zero eigenvalues
by $v_{\blambda,\rho,1}, \ldots, v_{\blambda,\rho,t(\blambda,\rho)}$.
Therefore, using \eqref{NM9}, we have
\begin{align}
&\langle v_{\blambda',\rho,j}| \Gamma_1(\rho^{\otimes n}) |v_{\blambda,\rho',j}\rangle 
=
\dim {\cal V}_{\blambda}
\|{T}_{\blambda,\rho',{\cal B}}
\rho^{\otimes n} {T}_{\blambda,\rho',{\cal B}}\|
\notag\\
=& \dim {\cal V}_{\blambda} e^{-n \Tr \rho' \log \rho}
=\dim {\cal V}_{\blambda} e^{-n D(\rho'\| \rho)-n H(\rho')}.
\end{align}

\begin{lemma}\Label{LL4}
For a Young index  $\blambda$, the dimension ${\cal V}_{\blambda}$
is evaluated as 
\begin{align}
e^{n H (\frac{\blambda}{n})} (n+1)^{-\frac{d(d+1)}{2}}
\le \dim {\cal V}_{\blambda}\le 
e^{n H (\frac{\blambda}{n})}.\Label{MNY}
\end{align}
\end{lemma}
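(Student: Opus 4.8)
The plan is to derive both inequalities in \eqref{MNY} from the hook length formula for $\dim{\cal V}_{\blambda}$ combined with the classical method-of-types bounds on the multinomial coefficient $\binom{n}{\lambda_1,\dots,\lambda_d}=n!/(\lambda_1!\cdots\lambda_d!)$. Because $\dim{\cal V}_{\blambda}$ and $H(\frac{\blambda}{n})$ depend only on the multiset $\{\lambda_1,\dots,\lambda_d\}$, I would first pass to the nonincreasing rearrangement $\nu_1\ge\nu_2\ge\cdots\ge\nu_d\ge 0$ of $\blambda$, let $d^{\ast}\le d$ be its number of nonzero parts, and write $\dim{\cal V}_{\blambda}=n!/\prod_c h(c)$, the product running over the cells $c$ of the Young diagram of $\nu$ and $h(c)$ the hook length \cite{GW,Group1}. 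Writing the hook length of cell $(i,j)$ as $h(i,j)=(\nu_i-j+1)+\mathrm{leg}(i,j)$ with leg length $\mathrm{leg}(i,j)=\nu'_j-i$, the only structural facts I need are $0\le\mathrm{leg}(i,j)\le d^{\ast}-i\le d-1$ (the diagram has at most $d^{\ast}$ rows).

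For the upper bound in \eqref{MNY}: $h(i,j)\ge\nu_i-j+1$ gives $\prod_{j=1}^{\nu_i}h(i,j)\ge\nu_i!$, hence $\prod_c h(c)\ge\prod_i\nu_i!=\prod_j\lambda_j!$ and $\dim{\cal V}_{\blambda}\le\binom{n}{\lambda_1,\dots,\lambda_d}$. Expanding $1=\bigl(\sum_j\frac{\lambda_j}{n}\bigr)^{n}\ge\binom{n}{\lambda_1,\dots,\lambda_d}\prod_j\bigl(\frac{\lambda_j}{n}\bigr)^{\lambda_j}=\binom{n}{\lambda_1,\dots,\lambda_d}e^{-nH(\blambda/n)}$ (with the conventions $0\log 0=0$ and $0^{0}=1$) then gives $\binom{n}{\lambda_1,\dots,\lambda_d}\le e^{nH(\blambda/n)}$, as required.

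For the lower bound in \eqref{MNY}: in row $i$ I would estimate $\frac{1}{\nu_i!}\prod_{j=1}^{\nu_i}h(i,j)=\prod_{j=1}^{\nu_i}\Bigl(1+\frac{\mathrm{leg}(i,j)}{\nu_i-j+1}\Bigr)\le\prod_{k=1}^{\nu_i}\frac{k+(d^{\ast}-i)}{k}=\binom{\nu_i+d^{\ast}-i}{d^{\ast}-i}\le(\nu_i+1)^{d^{\ast}-i}\le(n+1)^{d^{\ast}-i}$, using $\mathrm{leg}(i,j)\le d^{\ast}-i$ (then reindexing $k=\nu_i-j+1$) together with the elementary inequality $\binom{m+\ell}{\ell}=\prod_{t=1}^{\ell}\bigl(1+\frac{m}{t}\bigr)\le(m+1)^{\ell}$ and $\nu_i\le n$. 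Multiplying over $i=1,\dots,d^{\ast}$ yields $\prod_c h(c)\le(n+1)^{\binom{d^{\ast}}{2}}\prod_j\lambda_j!\le(n+1)^{d(d-1)/2}\prod_j\lambda_j!$, hence $\dim{\cal V}_{\blambda}\ge\binom{n}{\lambda_1,\dots,\lambda_d}(n+1)^{-d(d-1)/2}$. Finally the lower type bound $\binom{n}{\lambda_1,\dots,\lambda_d}\ge(n+1)^{-d}e^{nH(\blambda/n)}$ — since $\binom{n}{\lambda_1,\dots,\lambda_d}\bigl(\frac{\blambda}{n}\bigr)^{\blambda}$ is the largest among the at most $(n+1)^{d}$ terms of $1=\sum_{\gamma}\binom{n}{\gamma}\bigl(\frac{\blambda}{n}\bigr)^{\gamma}$; cf.\ \cite{CT} — gives $\dim{\cal V}_{\blambda}\ge e^{nH(\blambda/n)}(n+1)^{-d-d(d-1)/2}=e^{nH(\blambda/n)}(n+1)^{-d(d+1)/2}$.

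I do not expect any essential obstacle: the whole argument is bookkeeping around the hook length formula. The points that need a little care are the passage from the paper's monotone-increasing convention for $\blambda$ to the nonincreasing partition underlying the Young tableaux, the handling of zero parts, and checking that the leg-length contribution is exactly $(n+1)^{\binom{d^{\ast}}{2}}\le(n+1)^{d(d-1)/2}$, which — combined with the $(n+1)^{d}$ loss from counting types in the lower type bound — reproduces precisely the exponent $d(d+1)/2$ appearing in \eqref{MNY}.
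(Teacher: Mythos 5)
Your proposal is correct and follows essentially the same route as the paper: both factor $\dim{\cal V}_{\blambda}$ as the multinomial coefficient $n!/\blambda!$ times a correction factor shown to lie between $(n+1)^{-d(d-1)/2}$ and $1$, and then invoke the standard method-of-types bounds on $n!/\blambda!$. The only cosmetic difference is that you extract the correction from the hook-length formula, whereas the paper uses the equivalent product formula $\dim{\cal V}_{\blambda}=\frac{n!}{\blambda!}e(\blambda)$ with $e(\blambda)=\prod_{j>i}\frac{\lambda_j-\lambda_i-i+j}{\lambda_j+j-i}$, and your type-class lower bound loses $(n+1)^{-d}$ rather than $(n+1)^{-(d-1)}$ --- both combinations still land on the stated exponent $d(d+1)/2$.
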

The above lemma slightly improves the evaluation by \cite[Appendix B]{HM02b}.

\begin{proof}
For a Young index  $\blambda$, the dimension ${\cal V}_{\blambda}$
is calculated as 
\begin{align}
\dim {\cal V}_{\blambda}& =
\frac{n!}{\blambda!}
e(\blambda)\\
e(\blambda)&:=\prod_{j>i}\frac{\lambda_j-\lambda_i-i+j}{\lambda_j+j-i}<1,
\end{align}
where $ \blambda!:=\lambda_1!\lambda_2!\cdots \lambda_d!$.
For its detail, see \cite[Eq. (2.72)]{Group1}.
Using \cite[Theorem 2.5]{Springer}, we have 
\begin{align}
e^{n H (\frac{\blambda}{n})} (n+1)^{-(d-1)}
\le \frac{n!}{\blambda!}\le 
e^{n H (\frac{\blambda}{n})}.
\end{align}
The value $e(\blambda)$ is evaluated as
\begin{align}
&-\log e(\blambda)
=\sum_{j>i}\log \frac{\lambda_j+j-i}{\lambda_j-\lambda_i-i+j} 
\le  \frac{d(d-1)}{2}\log n.
\end{align}
Since $\frac{d(d-1)}{2}+(d-1)\le \frac{d(d-1)}{2}+d = \frac{d(d+1)}{2}$,
combining the above relations, we obtain \eqref{MNY}.
\end{proof}

\section{Proof of Lemma \ref{LL1}}\Label{S5}
We have an upper bound as follows.
\begin{align}
&\Tr \rho^{\otimes n} 
\Big(\sum_{(p',\rho') \in S_{\rho,r}^c\cap {\cal R}_n[{\cal B}]}T_{np',\rho',{\cal B}}^n \Big)\notag\\
=&
\sum_{ (\frac{\blambda_n}{n},\rho_n)\in S_{\rho,r}^c \cap {\cal R}_n[{\cal B}],j}
\langle v_{\blambda_n,\rho_n,j}| \Gamma_1(\rho^{\otimes n}) 
|v_{\blambda_n,\rho_n,j}\rangle \notag\\
\stackrel{(a)}{\le}&
\sum_{ (\frac{\blambda_n}{n},\rho_n)\in S_{\rho,r}^c\cap {\cal R}_n[{\cal B}] ,j}
e^{-n r(\frac{\blambda_n}{n},\rho_n\|\rho)
}  \notag\\
\stackrel{(b)}{\le}&
\sum_{ (\frac{\blambda_n}{n},\rho_n)\in S_{\rho,r}^c\cap {\cal R}_n[{\cal B}] }
d_{\blambda_n}
e^{-n r(\frac{\blambda_n}{n},\rho_n\|\rho)} \notag\\
\stackrel{(c)}{\le}&
(n+1)^{\frac{(d+4)(d-1)}{2}}e^{-nr}.\Label{IBT}
\end{align}
Here, $(a)$ follows from \eqref{MNY} in Lemma \ref{LL4}.
$(b)$ follows from the fact that the number of possible $j$ in this summand
is upper bounded by $d_{\blambda_n}$.
$(c)$ follows from \eqref{NMI} and \eqref{SOA}.
Hence, we obtain \eqref{ER1}.

We have a lower bound as follows.
\begin{align}
&\Tr \rho^{\otimes n} 
\Big(\sum_{(p',\rho') \in S_{\rho,r}^c\cap {\cal R}_n[{\cal B}]}T_{np',\rho',{\cal B}}^n \Big)\notag\\
=&
\sum_{ (\frac{\blambda_n}{n},\rho_n)\in S_{\rho,r}^c\cap{\cal R}_n[{\cal B}] ,j}
\langle v_{\blambda_n,\rho_n,j}| \Gamma_1(\rho^{\otimes n}) 
|v_{\blambda_n,\rho_n,j}\rangle \notag\\
\stackrel{(d)}{\ge}&
\sum_{ (\frac{\blambda_n}{n},\rho_n)\in S_{\rho,r}^c\cap{\cal R}_n[{\cal B}] ,j}
e^{-n r(\frac{\blambda_n}{n},\rho_n\|\rho)} 
 (n+1)^{-\frac{d(d+1)}{2}} \notag\\
\ge &
\sum_{ (\frac{\blambda_n}{n},\rho_n)\in S_{\rho,r}^c\cap{\cal R}_n[{\cal B}]}
e^{-n r(\frac{\blambda_n}{n},\rho_n\|\rho)} 
(n+1)^{-\frac{d(d+1)}{2}}.\Label{IBT2}
\end{align}
Here, $(d)$ follows from \eqref{MNY} in Lemma \ref{LL4}.
Since the number of elements of $S_{\rho,r}$ is polynomial for $n$,
Combining \eqref{NMI}, \eqref{IBT}, and \eqref{IBT2},
we obtain \eqref{IBT3}.

\section{Analysis for information quantities}\Label{S4D}
We prepare two useful lemmas related to 
our information quantities.
For $0<r < D(\sigma\|\rho)$, we define $s(r)$ as the solution of 
\begin{align}
s{\frac{d}{ds} \phi(1-s|\sigma \| \rho)}
=r+ \phi(1-s|\sigma \| \rho).\Label{ZMX}
\end{align}

\begin{lemma}\Label{L2}
We have
\begin{align}
\hat{B}_e(r|\rho\|\sigma)
= &\sup_{s\in (0,1)}
\frac{-(1-s) r- \phi(1-s|\sigma \| \rho)}{s}\notag\\
=&\frac{-(1-s(r)) r- \phi(1-s(r)|\sigma \| \rho)}{s(r)},
\Label{NMO}
\end{align}
and
\begin{align}
\frac{d}{dr} s(r)=&\frac{1}{s(r) \frac{d^2}{ds^2}\phi(1-s|\sigma \| \rho)|_{s=s(r)} }.\Label{PIV}
\end{align}
For $r \ge D(\sigma\|\rho)$, we have
\begin{align}
\hat{B}_e(r|\rho\|\sigma)
= &\sup_{s\in (0,1)}
\frac{-(1-s) r- \phi(1-s|\sigma \| \rho)}{s}=0.
\Label{NMO2}
\end{align}
\end{lemma}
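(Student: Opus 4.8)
The plan is to reduce all three assertions to elementary one-variable calculus for
\begin{align}
h(s):=\frac{-(1-s)r-\psi(s)}{s},\qquad \psi(s):=\phi(1-s|\sigma\|\rho),\qquad s\in(0,1),
\end{align}
so that $\hat{B}_e(r|\rho\|\sigma)=\sup_{s\in(0,1)}h(s)$ by definition. By Lemma \ref{LB1} the function $\psi$ is convex on $(0,1)$ (it is a reflection of a convex function), and it is real-analytic there. First I would record the boundary data that drive everything: $\psi(1)=\phi(0|\sigma\|\rho)=\log\Tr\sigma=0$; $\psi'(1^-)=D(\sigma\|\rho)$, which follows from $D(\sigma\|\rho)=\lim_{s\to0}D_{1+s}(\sigma\|\rho)$ together with $\phi(-s|\sigma\|\rho)=sD_{1+s}(\sigma\|\rho)$; and $\psi(0^+)=0$, equivalently $\hat{D}(\rho\|\sigma)=-\psi'(0^+)<\infty$, from which convexity alone forces $s\psi'(s)\to0$ as $s\to0$. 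I expect confirming these limiting and regularity properties of the sandwiched quantity $\phi$ to be the main obstacle; they should follow from standard properties of the sandwiched R\'{e}nyi divergence, the natural faithfulness hypothesis being what guarantees $\psi(0^+)=\psi(1)=0$ rather than a negative value or $-\infty$.

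Granting this, I would prove \eqref{NMO} by a first-derivative analysis. A short computation gives $h'(s)=-q(s)/s^2$ with $q(s):=s\psi'(s)-\psi(s)-r$, so the critical points of $h$ are exactly the solutions of \eqref{ZMX}. Differentiating, $q'(s)=s\psi''(s)\ge0$ by convexity, so $q$ is nondecreasing, and strictly increasing whenever $\psi$ is strictly convex (which holds unless $\rho$ is pure or $\rho=\sigma$). Since $q(0^+)=-r<0$ while $q(1^-)=D(\sigma\|\rho)-r>0$ when $0<r<D(\sigma\|\rho)$, the intermediate value theorem together with monotonicity produces a unique zero $s(r)\in(0,1)$; there $h'$ changes sign from $+$ to $-$, so $s(r)$ is the global maximizer of $h$ on $(0,1)$, which is \eqref{NMO}. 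Substituting the critical equation back also gives the convenient Legendre-type form $\hat{B}_e(r|\rho\|\sigma)=r-\psi'(s(r))$.

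For \eqref{PIV} I would simply differentiate the defining identity $s(r)\psi'(s(r))=r+\psi(s(r))$ with respect to $r$: writing $s=s(r)$, the two occurrences of $\psi'(s)\,s'(r)$ cancel, leaving $s\psi''(s)\,s'(r)=1$, hence $s'(r)=1/\bigl(s(r)\psi''(s(r))\bigr)$, which is \eqref{PIV}. This uses $\psi''(s(r))>0$ — the strict-convexity/non-degeneracy assumption — and the implicit function theorem to justify differentiability of $s(\cdot)$.

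Finally, for \eqref{NMO2} with $r\ge D(\sigma\|\rho)$ I would apply the subgradient inequality for the convex $\psi$ at $s=1$: $\psi(s)\ge\psi(1)+\psi'(1^-)(s-1)=-D(\sigma\|\rho)(1-s)\ge-r(1-s)$, so $-(1-s)r-\psi(s)\le0$ and hence $h(s)\le0$ on $(0,1)$, giving $\sup_{s\in(0,1)}h(s)\le0$. Combined with $\lim_{s\to1^-}h(s)=-\psi(1)=0$, this forces $\hat{B}_e(r|\rho\|\sigma)=0$. The only care needed here is the one-sided continuity and differentiability of $\psi$ at $s=1$, already part of the endpoint analysis above.
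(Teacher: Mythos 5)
Your proof is correct and follows essentially the same route as the paper: a first-derivative analysis of $h(s)=\frac{-(1-s)r-\phi(1-s|\sigma\|\rho)}{s}$ showing the critical-point equation \eqref{ZMX} locates the maximum, followed by implicit differentiation of that equation to get \eqref{PIV}. You are, however, more complete on two points. First, the paper asserts that the derivative of the numerator of \eqref{NMS} is $-s\frac{d^2}{ds^2}\phi(1-s|\sigma\|\rho)\ge 0$, which has the wrong sign (convexity gives $\le 0$ for $s>0$); your sign convention $q(s)=s\psi'(s)-\psi(s)-r$ with $q'\ge 0$ is the correct one, and your endpoint computations $q(0^+)=-r<0$, $q(1^-)=D(\sigma\|\rho)-r>0$ (resting on $\psi(0^+)=\psi(1)=0$ and $\psi'(1^-)=D(\sigma\|\rho)$, which the paper never verifies) are exactly what is needed to guarantee that $s(r)$ exists in $(0,1)$ and is a maximizer rather than a minimizer. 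Second, the paper's proof block gives no argument at all for \eqref{NMO2}; your subgradient inequality at $s=1$ together with $h(s)\to 0$ as $s\to 1^-$ supplies a clean one. Your caveat about needing $\hat{D}(\rho\|\sigma)<\infty$ (overlapping supports) to have $\psi(0^+)=0$ is a genuine implicit hypothesis of the lemma that the paper also leaves unstated.
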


\begin{proof}
Since the function $s \mapsto \phi(1-s|\sigma \| \rho)$
is $C^2$-continuous and convex, 
we have
\begin{align}
&\frac{d}{d s}
\frac{-(1-s) r- \phi(1-s|\sigma \| \rho)}{s}\notag\\
=&\frac{r+ \phi(1-s|\sigma \| \rho)
-s \frac{d}{ds} \phi(1-s|\sigma \| \rho)}{s^2}.\Label{NMS}
\end{align}
The derivative of the denominator is calculated as
\begin{align}
&\frac{d}{ds}
\Big(r+ \phi(1-s|\sigma \| \rho)
-s \frac{d}{ds} \phi(1-s|\sigma \| \rho)\Big)\notag\\
=&-s \frac{d^2}{ds^2} \phi(1-s|\sigma \| \rho) \ge 0.
\end{align}
Hence, the maximum of $\frac{-(1-s) r- \phi(1-s|\sigma \| \rho)}{s}$ is realized when
\eqref{NMS} equals zero.
That is, the maximum is realized when $s$ equals $s(r)$ that satisfies
\begin{align}
r+ \phi(1-s(r)|\sigma \| \rho)=
s(r) \frac{d}{ds} \phi(1-s|\sigma \| \rho)|_{s=s(r)}.\Label{NSX}
\end{align}
Taking the derivative for $r$ in \eqref{NSX}, we have
\eqref{PIV}.
\end{proof}

\begin{lemma}\Label{L4}
For $0<r < D(\sigma\|\rho)$, we have
\begin{align}
&\max_{r:0\le r< D(\sigma\|\rho)}
\Big(-(1-s(r_0))r\notag\\
&+ s(r_0) 
\frac{(1-s(r)) r+ \phi(1-s(r)|\sigma \| \rho)}{s(r)}\Big)\notag\\
=&\phi(1-s(r_0)|\sigma \| \rho).
\end{align}
\end{lemma}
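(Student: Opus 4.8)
\textbf{Proof proposal for Lemma \ref{L4}.}

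The plan is to treat this as a one-variable optimization over $r\in[0,D(\sigma\|\rho))$, with $r_0$ fixed. Write $f(r):=-(1-s(r_0))r+s(r_0)g(r)$, where $g(r):=\dfrac{(1-s(r))r+\phi(1-s(r)|\sigma\|\rho)}{s(r)}$ is, by \eqref{NMO} in Lemma \ref{L2}, exactly $-\hat{B}_e(r|\rho\|\sigma)$; thus $g$ is the negative of the Legendre-type transform appearing in $\hat{B}_e$. First I would compute $g'(r)$. Since $s(r)$ is defined by the stationarity condition \eqref{NSX}, the envelope theorem applies: differentiating $-g(r)=\hat{B}_e(r|\rho\|\sigma)=\dfrac{-(1-s)r-\phi(1-s|\sigma\|\rho)}{s}\big|_{s=s(r)}$ with respect to $r$, the $s$-derivative term vanishes at $s=s(r)$, leaving $-g'(r)=\dfrac{-(1-s(r))}{s(r)}=-\dfrac{1-s(r)}{s(r)}$, i.e. $g'(r)=\dfrac{1-s(r)}{s(r)}$.

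Next I would impose the first-order condition for the outer maximum: $f'(r)=-(1-s(r_0))+s(r_0)g'(r)=-(1-s(r_0))+s(r_0)\dfrac{1-s(r)}{s(r)}=0$. Solving gives $\dfrac{1-s(r)}{s(r)}=\dfrac{1-s(r_0)}{s(r_0)}$, hence $s(r)=s(r_0)$, hence $r=r_0$ (using that $r\mapsto s(r)$ is strictly monotone, which follows from \eqref{PIV} together with strict convexity of $s\mapsto\phi(1-s|\sigma\|\rho)$ — this is where the hypothesis $0<r<D(\sigma\|\rho)$ and Lemma \ref{LB1} are used, guaranteeing $\tfrac{d^2}{ds^2}\phi>0$ so the stationary point is unique and the map is a bijection onto its range). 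To confirm this is a maximum rather than a minimum, note $f''(r)=s(r_0)g''(r)$ and $g''(r)=\dfrac{d}{dr}\dfrac{1-s(r)}{s(r)}=-\dfrac{s'(r)}{s(r)^2}<0$ since $s'(r)>0$; so $f$ is concave and $r=r_0$ is the global maximum on the interval.

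Finally I would evaluate $f$ at $r=r_0$. There $g(r_0)=\dfrac{(1-s(r_0))r_0+\phi(1-s(r_0)|\sigma\|\rho)}{s(r_0)}$, so
\begin{align}
f(r_0)&=-(1-s(r_0))r_0+s(r_0)\cdot\frac{(1-s(r_0))r_0+\phi(1-s(r_0)|\sigma\|\rho)}{s(r_0)}\notag\\
&=-(1-s(r_0))r_0+(1-s(r_0))r_0+\phi(1-s(r_0)|\sigma\|\rho)\notag\\
&=\phi(1-s(r_0)|\sigma\|\rho),
\end{align}
which is the claimed identity. The only genuinely delicate point is justifying $g'(r)=\tfrac{1-s(r)}{s(r)}$ cleanly via the envelope/stationarity relation \eqref{NSX} and confirming strict monotonicity of $s(\cdot)$ on the relevant range; once those are in hand the rest is the short computation above. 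One should also check the boundary cases $r\to 0$ and $r\to D(\sigma\|\rho)$ do not beat the interior value, which is immediate from concavity of $f$ and $f'(r_0)=0$.
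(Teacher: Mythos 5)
Your proposal is correct and follows essentially the same route as the paper: both compute the first derivative using the stationarity of $s(r)$ (the envelope argument, which kills the $\partial/\partial s$ chain-rule term), obtain $f'(r)=-(1-s(r_0))+s(r_0)\tfrac{1-s(r)}{s(r)}$, establish concavity from $f''(r)=-\tfrac{s(r_0)}{s(r)^2}s'(r)<0$ via \eqref{PIV}, and conclude the maximum is attained at $r=r_0$, where evaluation gives $\phi(1-s(r_0)|\sigma\|\rho)$. The only cosmetic difference is that you additionally argue the critical point is unique via strict monotonicity of $s(\cdot)$, whereas the paper simply checks that $r=r_0$ is critical and lets concavity do the rest.
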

\begin{proof}
We take the first derivative of the objective function as follows.
\begin{align}
&\frac{d}{dr}\Big(
-(1-s(r_0))r+ s(r_0) 
\frac{(1-s(r)) r+ \phi(1-s(r)|\sigma \| \rho)}{s(r)}\Big)\notag\\
=&
-(1-s(r_0))+ s(r_0)  \frac{(1-s(r))}{s(r)}\notag\\
&+s(r_0)
\frac{\partial}{\partial s}
\frac{(1-s) r+ \phi(1-s|\sigma \| \rho)}{s}\Big|_{s=s(r)} \frac{d s(r)}{dr}\notag\\
=&
-(1-s(r_0))+ s(r_0)  \frac{(1-s(r))}{s(r)}.
\end{align}

Then, we take the second derivative of the objective function as follows.
\begin{align}
&\frac{d^2}{dr^2}\Big(\!
-(1-s(r_0))r+ s(r_0) 
\frac{(1-s(r)) r+ \phi(1-s(r)|\sigma \| \rho)}{s(r)}\Big)\notag\\
=&
-\frac{s(r_0)}{s(r)^2}\frac{d s(r)}{dr}.
\end{align}
Hence, due to \eqref{PIV},
the above second derivative is negative.
The maximum is realized when the first derivative is zero.
When $r=r_0$, the first derivative is zero.
Hence, the desired statement is obtained.
\end{proof}

\section{Proof of Lemma \ref{LL2}}\Label{S7}
We employ Petz relative Renyi entropy 
$D_{1-s|P}(\rho\|\sigma):= \frac{1}{-s}\phi_P(s|\rho\|\sigma)$ \cite{Petz}, where
$\phi_P(s|\rho\|\sigma):= \log \Tr \rho^{1-s}\sigma^s$.
The information processing inequality for Petz relative Renyi entropy 
implies that
\begin{align}
n D_{1-s|P}(\rho\|\sigma)\ge 
\frac{-1}{s}\phi(1-s|{\cal E}_{{\cal B}}(\sigma^{\otimes n})\|\rho^{\otimes n} ).\Label{LAP}
\end{align}
Combining \eqref{ZXL} and \eqref{LAP}, we have
\begin{align}
D_{1-s|P}(\rho\|\sigma)\ge 
\frac{-1}{s}\phi(1-s|\sigma\|\rho).\Label{LAP2}
\end{align}
Taking the limit $s\to +0$, we obtain 
\eqref{SMO2} because the limit of Petz relative Renyi entropy 
is the quantum relative entropy $D(\rho\|\sigma)$.

To show \eqref{SMO}, we employ Lemma \ref{L2}.
Since $\phi(1-s|\sigma \| \rho)$ is convex,
$\frac{d^2}{ds^2}\phi(1-s|\sigma \| \rho)|_{s=s(r)}\ge 0$.
Thus, \eqref{PIV} guarantees that $\frac{d}{dr} s(r)\ge 0$.
$s(r)$ is monotone increasing for $0<r < D(\sigma\|\rho)$.
In particular, when $r=0$, $s(r)=0$ due to the relation \eqref{ZMX}.
Hence, when $r$ goes to zero, $s(r)$ goes to zero.
Using \eqref{ZMX}
and \eqref{NMO}, we have
\begin{align}
&\lim_{r \to 0}
\hat{B}_e(r|\rho\|\sigma)\notag\\
\stackrel{(a)}{=}&\lim_{r \to 0}
\frac{-(1-s(r)) r- \phi(1-s(r)|\sigma \| \rho)}{s(r)} \notag\\
\stackrel{(b)}{=}&\lim_{r \to 0}
\frac{-r- \phi(1-s(r)|\sigma \| \rho)}{s(r)} \notag\\
\stackrel{(c)}{=}&-\lim_{r \to 0}
\frac{d}{ds} \phi(1-s|\sigma \| \rho)|_{s=s(r)}\notag\\
=&-\lim_{s \to 0}
\frac{d}{ds} \phi(1-s|\sigma \| \rho)=\hat{D}(\rho\|\sigma).
\end{align}
Here, $(a)$ follows from \eqref{NMO} in Lemma \ref{L2}.
$(b)$ follows from the fact that $s(r)$ goes to zero.
$(c)$ follows from \eqref{ZMX}.
We obtain \eqref{SMO}.

\section{Direct part: Proof of Theorem \ref{TH1}}\Label{S8}
We have
\begin{align}
&e^{\phi(1-s|{\cal E}_{{\cal B}}(\sigma^{\otimes n})\|\rho^{\otimes n} )}
=e^{-s D_{1-s}(\Gamma_1(\rho^{\otimes n}) \| 
\Gamma_1({\cal E}_{{\cal B}}\sigma^{\otimes n}))}\notag\\
=&
\sum_{ (\frac{\blambda_n'}{n},\rho_n')\in {\cal R}_n[{\cal B}] ,j}
\langle v_{\blambda_n',\rho_n',j}| \Gamma_1(\rho^{\otimes n}) 
|v_{\blambda_n',\rho_n',j}\rangle^{1-s} \notag\\
&\cdot \langle v_{\blambda_n',\rho_n',j}| \Gamma_1({\cal E}_{{\cal B}}\sigma^{\otimes n}) 
|v_{\blambda_n',\rho_n',j}\rangle^s \notag\\
\ge &
\sum_j
\langle v_{\blambda_n,\rho_n,j}| \Gamma_1(\rho^{\otimes n}) 
|v_{\blambda_n,\rho_n,j}\rangle^{1-s}
\langle v_{\blambda_n,\rho_n,j}| \Gamma_1({\cal E}_{{\cal B}}\sigma^{\otimes n}) |v_{\blambda_n,\rho_n,j}\rangle^s \notag\\
\stackrel{(a)}{\ge}&
\sum_j
(e^{-n 
r(\frac{\blambda_n}{n},\rho_n\|\rho)} 
(n+1)^{-\frac{d(d+1)}{2}})^{1-s}\notag\\
&\cdot \langle v_{\blambda_n,\rho_n,j}| \Gamma_1({\cal E}_{{\cal B}}\sigma^{\otimes n}) |v_{\blambda_n,\rho_n,j}\rangle^s \notag\\
= &
(e^{-n r(\frac{\blambda_n}{n},\rho_n\|\rho)} 
(n+1)^{-\frac{d(d+1)}{2}})^{1-s}\notag\\
&\cdot 
\sum_j\langle v_{\blambda_n,\rho_n,j}| \Gamma_1({\cal E}_{{\cal B}}\sigma^{\otimes n}) |v_{\blambda_n,\rho_n,j}\rangle^s \notag\\
\stackrel{(b)}{\ge}&
(e^{-n 
r(\frac{\blambda_n}{n},\rho_n\|\rho)} 
(n+1)^{-\frac{d(d+1)}{2}})^{1-s}\notag\\
&\cdot 
\Big(\sum_j\langle v_{\blambda_n,\rho_n,j}| \Gamma_1({\cal E}_{{\cal B}}\sigma^{\otimes n}) |v_{\blambda_n,\rho_n,j}\rangle\Big)^s \notag\\
= &
e^{-n (1-s) r_n}
(n+1)^{-\frac{(1-s)d(d+1)}{2}}
\Big(\Tr \sigma^{\otimes n} 
T_{\blambda_n,\rho_n,{\cal B}}^n \Big)^s .\Label{NZP}
\end{align}
Here, $(a)$ follows from \eqref{MNY} in Lemma \ref{LL4}.
$(b)$ follows from the inequality $x^s+y^s \ge (x+y)^s$ for $x,y\ge 0$.
\if0
Using \eqref{ZXL} and \eqref{NZP},
we have
\begin{align}
&\phi(1-s|\sigma\|\rho) \notag\\
\stackrel{(c)}{\ge}& 
-(1-s)\lim_{n\to \infty}
(D(\rho_n\| \rho)+H(\rho_n)-H (\frac{\blambda_n}{n}))\notag\\
&+s \lim_{n\to \infty}\frac{1}{n}\log 
\Big(\Tr \sigma^{\otimes n} 
T_{\blambda_n,\rho_n,{\cal B}}^n \Big).\Label{XNP}
\end{align}
Setting $r:=\lim_{n\to \infty}
(D(\rho_n\| \rho)+H(\rho_n)-H (\frac{\blambda_n}{n}))$ and using \eqref{XNP},
\fi
Thus, we have
\begin{align}
&\frac{-1}{n}\log \Tr \sigma^{\otimes n} 
T_{\blambda_n,\rho_n,{\cal B}}^n \notag\\
\stackrel{(c)}{\ge}& 
\frac{-(1-s) r_n- 
\frac{1}{n} \phi(1-s|{\cal E}_{{\cal B}}(\sigma^{\otimes n})\|\rho^{\otimes n} )}{s} \notag\\
&-\frac{(1-s)d(d+1)}{2sn}\log (n+1) \notag\\
\stackrel{(d)}{\ge}& 
\frac{-(1-s) r_n- \phi(1-s|\sigma \| \rho)}{s} \notag\\
&-\frac{(1-s)d(d+3)}{2sn}\log (n+1) \notag\\
\ge & 
\frac{-(1-s) r_n- \phi(1-s|\sigma \| \rho)}{s} 
-\frac{d(d+3)}{2sn}\log (n+1) ,
\Label{XNC}
\end{align}
where 
$(c)$ follows from \eqref{NZP}, and
$(d)$ follows from \eqref{XNF}.
and the inequality $-\frac{d(d+1)}{2}- (d-1)\ge -\frac{d(d+3)}{2}$.

To get \eqref{ZPD2}, taking the limit $n\to \infty$ with fixed 
$s\in (0,1)$, we have
\begin{align}
\liminf_{n\to \infty}
\frac{-1}{n}\log \Tr \sigma^{\otimes n} 
T_{\blambda_n,\rho_n,{\cal B}}^n 
\ge 
\frac{-(1-s) r- \phi(1-s|\sigma \| \rho)}{s}.
\end{align}
Taking the supremum for $s\in (0,1)$, we obtain \eqref{ZPD2}.


\section{Converse part: Proof of Theorem \ref{TH2}}\Label{S9}
We have
\begin{align}
&e^{\phi(1-s|{\cal E}_{{\cal B}}(\sigma^{\otimes n})\|\rho^{\otimes n} )}
=e^{-s D_{1-s}(\Gamma_1(\rho^{\otimes n}) \| 
\Gamma_1({\cal E}_{{\cal B}}\sigma^{\otimes n}))} \notag\\
=&
\sum_{ (\frac{\blambda_n}{n},\rho_n)\in {\cal R}_n[{\cal B}] ,j}
\langle v_{\blambda_n,\rho_n,j}| \Gamma_1(\rho^{\otimes n}) 
|v_{\blambda_n,\rho_n,j}\rangle^{1-s} \notag\\
&\cdot \langle v_{\blambda_n,\rho_n,j}| \Gamma_1({\cal E}_{{\cal B}}\sigma^{\otimes n}) 
|v_{\blambda_n,\rho_n,j}\rangle^s \notag\\
\stackrel{(a)}{\le}&
\sum_{ (\frac{\blambda_n}{n},\rho_n)\in {\cal R}_n[{\cal B}] ,j}
(e^{-n 
r(\frac{\blambda_n}{n},\rho_n\|\rho)
} 
)^{1-s}\notag\\
&\cdot \langle v_{\blambda_n,\rho_n,j}| \Gamma_1({\cal E}_{{\cal B}}\sigma^{\otimes n}) |v_{\blambda_n,\rho_n,j}\rangle^s \notag\\
= &
\sum_{ (\frac{\blambda_n}{n},\rho_n)\in {\cal R}_n[{\cal B}]}
(e^{-n 
r(\frac{\blambda_n}{n},\rho_n\|\rho)
} )^{1-s}\notag\\
&\cdot 
\sum_j\langle v_{\blambda_n,\rho_n,j}| \Gamma_1({\cal E}_{{\cal B}}\sigma^{\otimes n}) |v_{\blambda_n,\rho_n,j}\rangle^s \notag\\
\stackrel{(b)}{\le}&
\sum_{ (\frac{\blambda_n}{n},\rho_n)\in {\cal R}_n[{\cal B}]}
(e^{-n 
r(\frac{\blambda_n}{n},\rho_n\|\rho)
} 
)^{1-s}\notag\\
&\cdot d_{\blambda_n}
\Big(\sum_j\langle v_{\blambda_n,\rho_n,j}| \Gamma_1({\cal E}_{{\cal B}}\sigma^{\otimes n}) |v_{\blambda_n,\rho_n,j}\rangle\Big)^s \notag\\
= &
\sum_{ (\frac{\blambda_n}{n},\rho_n)\in {\cal R}_n[{\cal B}]}
(e^{-n 
r(\frac{\blambda_n}{n},\rho_n\|\rho)
} 
)^{1-s}
d_{\blambda_n}
\Big(\Tr \sigma^{\otimes n} 
T_{\blambda_n,\rho_n,{\cal B}}^n \Big)^s \notag \\
\le &
\sum_{ (\frac{\blambda_n}{n},\rho_n)\in {\cal R}_n[{\cal B}]}
(e^{-n r(\frac{\blambda_n}{n},\rho_n\|\rho)
} 
)^{1-s}
d_{\blambda_n} \\
&\cdot \Big(
\max_{ (\frac{\blambda_n'}{n},\rho_n')\in 
S_{\rho,r(\frac{\blambda_n}{n},\rho_n\|\rho)} \cap
{\cal R}_n[{\cal B}] }
\Tr \sigma^{\otimes n} 
T_{\blambda_n',\rho_n',{\cal B}}^n \Big)^s .\Label{ZCO}
\end{align}
Here, $(a)$ follows from \eqref{MNY} in Lemma \ref{LL4}.
$(b)$ follows from the fact that the number of possible $j$ in this summand
is upper bounded by $d_{\blambda}$.

We define
\begin{align}
R(r):
&=
\limsup_{n\to \infty}\frac{-1}{n}\log \Tr \sigma^{\otimes n} 
\Big(\sum_{(p',\rho') \in S_{\rho,r}\cap {\cal R}_n[{\cal B}]}T_{np',\rho',{\cal B}}^n \Big) \notag\\
&=
\limsup_{n\to \infty}\frac{-1}{n}\log 
\Big(\max_{(p',\rho') \in S_{\rho,r}\cap {\cal R}_n[{\cal B}]}\Tr \sigma^{\otimes n} T_{np',\rho',{\cal B}}^n \Big) .
\Label{ER5}
\end{align}
The second equality holds because 
\eqref{SOA}
guarantee that $\big|{\cal R}_n[{\cal B}]\big|$
increases polynomially for $n$.

Using \eqref{ZPD} and \eqref{NMO}, we have
\begin{align}
R(r) \ge
\frac{-(1-s(r)) r- \phi(1-s(r)|\sigma \| \rho)}{s(r)}.
\end{align}
Since 
\eqref{NMI} and \eqref{SOA}
guarantee that $d_{\blambda_n}$ and 
$\big|{\cal R}_n[{\cal B}]\big|$
increase polynomially for $n$,
using \eqref{ZXL}, \eqref{ZCO}, and \eqref{ER5}, we have
\begin{align}
&\phi(1-s|\sigma \| \rho)
\le
\sup_{r\ge 0}-(1-s)r- s R(r)\Label{BNB}
\end{align}
for $s \in (0,1)$.
Choosing $s=s(r_0)$, we have
\begin{align}
&\phi(1-s(r_0)|\sigma \| \rho)
\stackrel{(a)}{\le}
\sup_{r\ge 0}-(1-s(r_0))r- s(r_0) R(r)\notag\\
\stackrel{(b)}{\le} &
\sup_{r:0\le r< D(\sigma\|\rho)}\Big(-(1-s(r_0))r\notag\\
&+ s(r_0) 
\frac{(1-s(r)) r+ \phi(1-s(r)|\sigma \| \rho)}{s(r)}\Big)
\notag\\
\stackrel{(c)}{=} &\phi(1-s(r_0)|\sigma \| \rho).
\Label{LSU}
\end{align}
Here, $(a)$ follows from \eqref{BNB}.
$(b)$ follows from \eqref{ZPD} and \eqref{NMO2} in 
Lemma \ref{L2}.
$(c)$ follows from Lemma \ref{L4}, which is shown in Section \ref{S4D}.
Also, Lemma \ref{L4} guarantees that the supremum \eqref{LSU} is realized when $r=r_0$.

Therefore, we find that
\begin{align}
&\phi(1-s(r_0)|\sigma \| \rho)=
\sup_{r\ge 0}-(1-s(r_0))r- s(r_0) R(r)\notag\\
= &
-(1-s(r_0))r- s(r_0) R(r_0).\Label{CPR}
\end{align}
The combination of \eqref{NMO} and \eqref{CPR} implies 
\begin{align}
\hat{B}_e(r_0|\rho\|\sigma)=R(r_0).\Label{ER4}
\end{align}
Combining \eqref{ZPD} and \eqref{ER4}, we have
\begin{align}
&\limsup_{n\to \infty}\frac{-1}{n}\log \Tr \sigma^{\otimes n} 
\Big(\sum_{(p',\rho') \in S_{\rho,r}\cap {\cal R}_n[{\cal B}]}T_{np',\rho',{\cal B}}^n \Big)
\notag\\
=&
\liminf_{n\to \infty}\frac{-1}{n}\log \Tr \sigma^{\otimes n} 
\Big(\sum_{(p',\rho') \in S_{\rho,r}\cap {\cal R}_n[{\cal B}]}T_{np',\rho',{\cal B}}^n \Big). \Label{ER3}
\end{align}
The combination of \eqref{ER5}, \eqref{ER4}, and \eqref{ER3}
implies \eqref{ER2}.

Here, we should remark that 
the proof for \eqref{ZPD} works with $r=0$,
but the above proof does not work with $r_n \to 0$
as follows.
In this case
we need to choose $r_0=0$ in \eqref{LSU}.
That is, we need to choose $s=s(0)=0$ in \eqref{BNB}.
However, \eqref{ZCO} does not imply \eqref{BNB}
under $s=0$.
Therefore, our proof for Theorem \ref{TH2}
does not work with  $r_n \to 0$.

\section{Discussion}\Label{S10}
We have proposed another quantum version of Sanov theorem
by introducing a quantum extension of an empirical distribution.
In our analysis, a quantum empirical distribution is given as
the pair of an empirical distribution with respect to the given basis
and a Young index.
As given in Corollary \ref{Cor}, the exponent is given as $\hat{D}(\rho\|\sigma)$, which is smaller than the conventional quantum relative entropy
$D(\rho\|\sigma)$, which gives the optimal performance for simple quantum hypothesis testing as quantum Stein's lemma \cite{HP,ON}.
Several results in information theory for classical systems
employ empirical distributions, like \cite{WH}.
Hence, it can be expected that our result can be applied to 
their quantum extension.
Therefore, it is an interesting future study to derive 
various quantum extensions by using our result. 

\if0
Our result can be 
considered as a quantum counterpart of \eqref{NMA}.
Our result cannot be considered as 
a quantum counterpart of \eqref{BB1}.
Therefore, it is another interesting future study to derive 
a quantum counterpart of \eqref{BB1}.
\fi

\section*{Acknowledgement}
The author is supported in part by the National Natural Science Foundation of China (Grant No.
62171212).
The author is grateful for Professor Li Gao
to helpful comments.

\end{document}